\newtheorem{theorem}{Theorem}
\newtheorem{corollary}{Corollary}
\newtheorem{lemma}{Lemma}
\newtheorem{definition}{Definition}
\newtheorem{assumption}{Assumption}
\newtheorem{remark}{Remark}
\newenvironment{optimization}[1]{
    \begin{tcolorbox}[colback=white,colframe=black,arc=0mm,title={ #1},left=0pt,right=0pt,top=0pt,bottom=0pt]
    }{
    \end{tcolorbox}
}
\begin{document}

\title{Frequency-Varying Optimization: A Control Framework for New Dynamic Frequency Response Services}

\author{Yiqiao Xu, \IEEEmembership{Member, IEEE}, Quan Wan, \IEEEmembership{Graduate Student Member, IEEE}, and Alessandra Parisio, \IEEEmembership{Senior Member, IEEE}
\thanks{This work was partially supported by Supergen Energy Networks Impact Hub (EP/Y016114/1) and Grid Scale Thermal and Thermo-Chemical Electricity Storage (EP/W027860/1). Yiqiao Xu, Quan Wan, and Alessandra Parisio are with the Department of Electrical and Electronic Engineering, The University of Manchester, Manchester, M13 9PL U.K. (e-mails: yiqiao.xu@manchester.ac.uk, alessandra.parisio@manchester.ac.uk).}
\thanks{\noindent
\copyright~2025 IEEE. Personal use of this material is permitted. 
Permission from IEEE must be obtained for all other uses, in any current or future media, 
including reprinting/republishing this material for advertising or promotional purposes, 
creating new collective works, for resale or redistribution to servers or lists, 
or reuse of any copyrighted component of this work in other works.  
The final version of record is available at:  
\href{https://ieeexplore.ieee.org/abstract/document/11169504}{doi:10.1109/TPWRS.2025.3611641}.
}
}

% The paper headers
\markboth{Journal of \LaTeX\ Class Files,~Vol.~14, No.~8, August~2021}%
{Xu \MakeLowercase{\textit{et al.}}: Frequency-Varying Optimization: A Control Framework for New Dynamic Frequency Response Services}

\IEEEpubid{0000--0000/00\$00.00~\copyright~2021 IEEE}
% Remember, if you use this you must call \IEEEpubidadjcol in the second
% column for its text to clear the IEEEpubid mark.

\maketitle

\thispagestyle{arxiv}   % 首页页脚加声明
\pagestyle{arxiv}       % 后续页也加声明
\begin{abstract}
     To address the variability of renewable generation, initiatives have been launched globally to provide faster and more effective frequency responses. In the UK, the National Energy System Operator (NESO) has introduced a suite of three new dynamic services, where aggregation of assets is expected to play a key role. For an Aggregated Response Unit (ARU), the required level of frequency response varies with grid frequency, resulting in a frequency-varying equality constraint that assets should meet collectively. We show that the optimal coordination of an ARU constitutes a Frequency-Varying Optimization (FVO) problem, in which the optimal trajectory for each asset evolves dynamically. To facilitate online optimization, we reformulate the FVO problem into Tracking of the Optimal Trajectory (TOT) problems, with algorithms proposed for two scenarios: one where the asset dynamics are negligible, and another where they must be accounted for. Under reasonable conditions, the ARU converges to the optimal trajectory within a fixed time, and within the maximum delivery time requested by NESO. The proposed framework can be readily distributed to coordinate a large number of assets. Numerical results verify the effectiveness and scalability of the proposed control framework.
\end{abstract}

\begin{IEEEkeywords}
Frequency-varying optimization, fast frequency response, optimization algorithm.
\end{IEEEkeywords}

\section{Introduction}
\IEEEPARstart{T}{he} increasing penetration of renewable generation has led to reduced system inertia, posing significant challenges to frequency stability \cite{9796617}. Given the limited interconnections with continental grids, many island countries have established Fast Frequency Response (FFR) services to mitigate the impact of low inertia \cite{8864014}. In recent years, the value of energy storage systems in delivering FFR services has been well demonstrated \cite{pusceddu2021synergies, li2022variable}, particularly for Battery Energy Storage Systems (BESSs), which stand out for their fast response capabilities and operational flexibility \cite{akram2020review, wu2024dynamic, cao2024battery}, with applications spanning the generation, grid, and user sides.

As part of the UK's FFR services, firm frequency response, encompassing both static and dynamic types, is procured to manage frequency deviations in real time \cite{NGESO_FFR}. In response to evolving system needs, the National Energy System Operator (NESO) has phased out Dynamic Firm Frequency Response (DFFR) in favor of a suite of three new dynamic services---Dynamic Regulation (DR), Dynamic Moderation (DM), and Dynamic Containment (DC) \cite{future}. Designed to maintain grid frequency within 50$\pm$0.5 Hz, these services differ in their response envelopes and require participating units to respond rapidly and proportionally to frequency events, as illustrated in Fig. 1. DR is a pre-fault service for correcting continuous, small frequency deviations, with up to 10 s allowed for full delivery. DM is also a pre-fault service, primarily active during periods of high volatility, but must be fully delivered within 1 s. In contrast, DC is a post-fault service designed to arrest large frequency deviations, such as those caused by system faults, and is likewise required to respond within 1 s.

\begin{figure}[t]	\centerline{\includegraphics[width=0.48\textwidth]{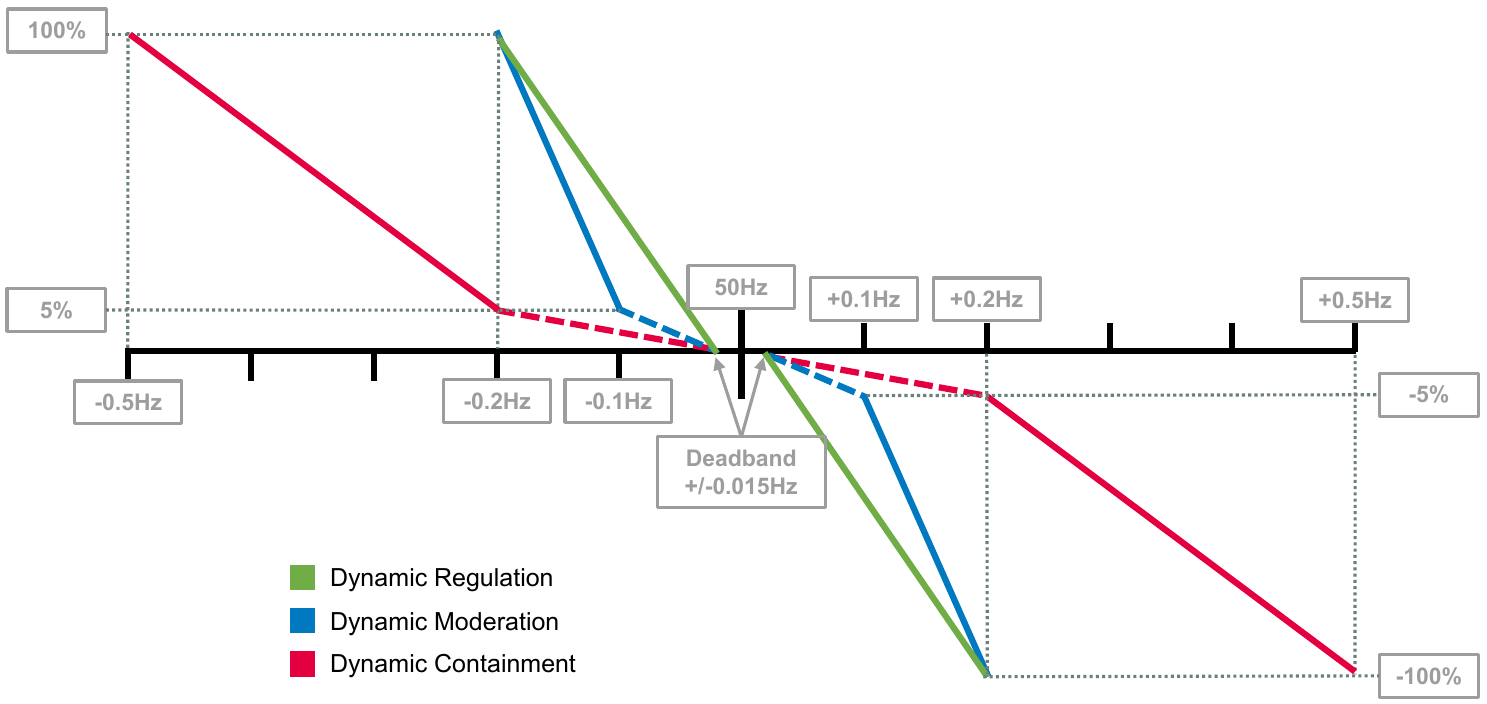}}
	\caption{Delivery requirement curves for DC, DM, and DR \cite{NGESO_guide}.}
	\label{DR section droop}
\end{figure}

\IEEEpubidadjcol

Assets at both the transmission and distribution levels are encouraged to participate. If eligible and located within the same control area, assets (generation, demand, or energy-limited) can register as an Aggregated Response Unit (ARU), an entity that is expected to become increasingly prevalent \cite{NGESO_guide}. However, a minimum response capacity of 1 MW is required per response unit, with maximum capacities of 100 MW for DC and 50 MW for DR and DM. To date, utility-scale storage projects have played a dominant role in these services, as evidenced by recent auction results \cite{NGESO_Auction}. These projects typically consist of multiple BESSs at one or more sites, and given the necessity of responding as contracted, synchronizing responses, and optimizing essential criteria, their aggregation should be conducted optimally. 

Accordingly, this paper investigates the optimal coordination of an ARU, where the responses of BESSs must collectively meet the delivery requirement curve, comply with state/input constraints, and minimize the cost of deviating from their operational baselines submitted to NESO. Three major challenges are identified for the new dynamic services:
\begin{itemize}
\item The delivery requirement curve yields a frequency-varying equality constraint that all assets need to meet, causing the optimal trajectory for each asset to evolve dynamically;

\item The services need to be delivered within the maximum delivery time (10/1/1 s for DR/DM/DC), which imposes an upper bound on the time available for coordination to be optimized online.

\item The control framework should be computationally efficient and scalable to coordinate a large number of assets in real time.
\end{itemize}

Among existing optimization-based control frameworks, online convex optimization offers explicit convergence-rate guarantees and is relatively easy to implement for frequency regulation \cite{8999747,10148805}. To minimize a smooth convex function, the convergence rate is upper bounded by $\mathcal{O}(1/\sqrt{T})$ \cite{hazan2016introduction}, which is insufficient for such nonstationary applications. Their continuous-time variants have also been employed for optimal frequency control \cite{7163587,wang2024distributed}; still, enhancements are necessary to achieve convergence within the maximum delivery time requested by NESO. In \cite{9107487}, a model predictive control-based supervisory layer is proposed to optimize converter setpoints, which are then tracked by a Proportional-Integral (PI) controller. In \cite{10820007}, a neural PI controller is developed to solve a steady-state economic dispatch problem and restore frequency, which, in case of need, may provide tracking guarantees through neural network design. In \cite{9744119, li2022coordinated, 9779512}, (deep) reinforcement learning algorithms are proposed for frequency regulation services. Similar to \cite{10820007}, these algorithms feature offline training and online decision-making but need to be retrained whenever the environment changes. In \cite{9399101}, a real-world demonstration of distributed control for heterogeneous assets is presented, solving a coordination problem at each regulation instant to optimally allocate the RegD signal in Automatic Generation Control (AGC). In \cite{8949561}, a hierarchical control framework comprising a central supervisor, regional aggregators, and local controllers is designed for FFR. To provide FFR within 500 ms, an online control framework is developed in \cite{9234642}, achieving fast and near-optimal coordination of thousands of assets across the distribution feeder.

\begin{figure}[t]
    \centering
    \begin{subfigure}{0.48\textwidth}
        \centering
        \includegraphics[width=\textwidth]{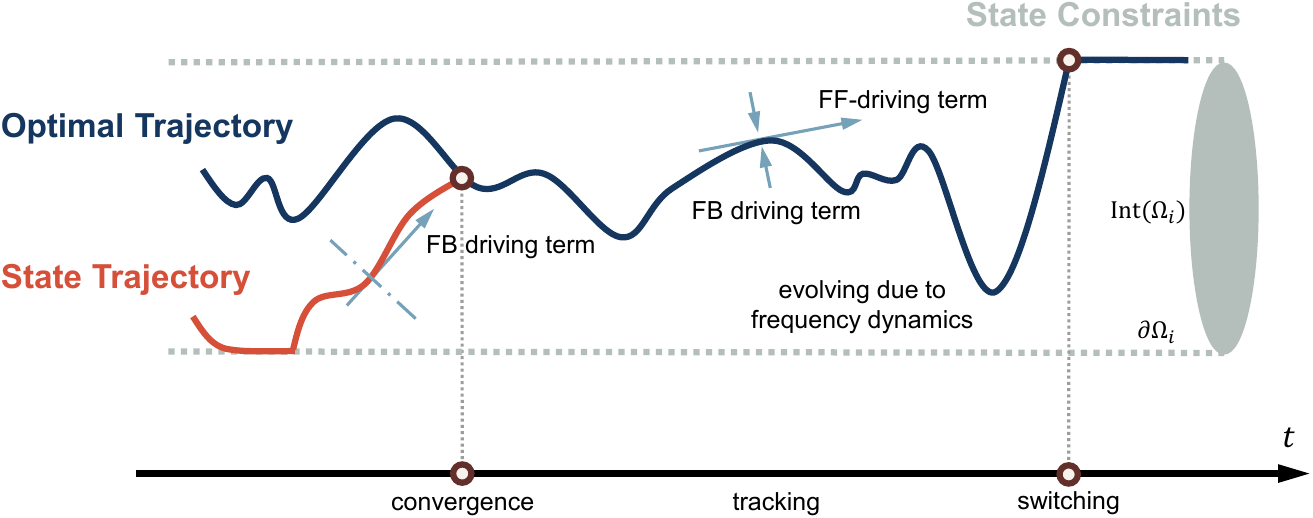}
    \end{subfigure}
    \begin{subfigure}{0.48\textwidth}
        \centering
        \includegraphics[width=\textwidth]{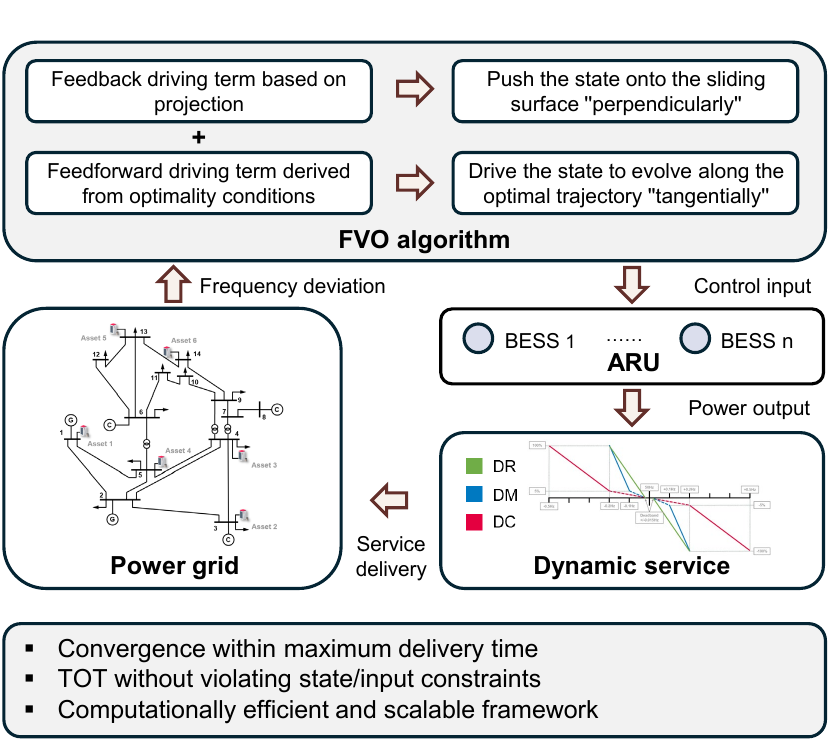}
    \end{subfigure}
    \caption{Overview of the proposed control framework.}
\end{figure}

Ensuring that the deployed resource consistently tracks the required quantity remains a challenge in \cite{7163587,wang2024distributed,8999747,10148805,8949561,9234642,9399101}, as these control frameworks are purely feedback-driven and lack predictive components. In terms of tracking, prediction-correction algorithms \cite{7902101,7993088,8062794} have been developed for Time-Varying Optimization (TVO) problems, where the cost and/or constraints are parameterized by time. Here, the goal is partially aligned with output regulation \cite{isidori1990output}, which concerns tracking a trajectory in the presence of disturbances generated by an exosystem, often through feedforward control \cite{ding2022distributed}. This alignment motivates us to introduce Frequency-Varying Optimization (FVO) as a specific class of TVO. We design feedforward driving terms to track the optimal trajectory with vanishing tracking errors. The variations in the trajectory are intrinsically tied to grid frequency dynamics and therefore interpreted from an exosystem perspective. The main contributions of this paper are as follows:
\begin{itemize}
    \item This paper bridges frequency response services with FVO. We show that optimal coordination of an ARU constitutes an FVO problem, where the equality constraint varies with grid frequency and the optimal trajectory evolves dynamically. While this paper targets the UK's new dynamic services, the methodology is directly applicable to international FFR services and supports both transmission- and distribution-level ARUs.
    
    \item For implementation purposes, the FVO problem is reformulated into two Tracking of the Optimal Trajectory (TOT) problems, without or with consideration of asset dynamics. Compared to Problem TOT-1, Problem TOT-2 addresses a more practical but challenging scenario where the asset dynamics are non-negligible.
    
    \item We propose a control framework termed FVO, featuring projected algorithms that incorporate fixed-time control \cite{8322314} and feedback/feedforward driving terms to find and track the optimal trajectory, as illustrated in Fig. 2. These algorithms solve the TOT problems online without violating state/input constraints, guaranteeing convergence within the maximum delivery time through appropriate selection of control parameters.

    \item The effectiveness and scalability of the proposed control framework are verified on the IEEE 14-bus and 39-bus systems in MATLAB/Simulink. Results show that the required quantity is delivered within the maximum delivery time requested by NESO, and that the overall response approaches the ideal outcomes achievable by DR/DM/DC. Furthermore, the control framework is computationally efficient for coordinating a large number of assets and adaptable to distributed implementation.
\end{itemize}

The remainder of this paper is organized as follows: 
Section \uppercase\expandafter{\romannumeral2} presents some preliminaries. Section \uppercase\expandafter{\romannumeral3} and \uppercase\expandafter{\romannumeral4} present our main results including the problems and algorithms. Section \uppercase\expandafter{\romannumeral5} demonstrates the effectiveness and scalability of the proposed framework through case studies. Section \uppercase\expandafter{\romannumeral6} concludes this paper. Finally, Section \uppercase\expandafter{\romannumeral7} provides the Appendix.

\section{Preliminaries}
\subsection{Notation, Terminology, and Basics}
The sets of real numbers and integers are denoted by $\mathbb{R}$ and $\mathbb{Z}$, respectively. We use $\mathbb{R}_{\geq0}$ and $\mathbb{R}_{>0}$ to denote the sets of non-negative and positive real numbers, and similarly for $\mathbb{Z}_{\geq0}$ and $\mathbb{Z}_{>0}$. A vector $x(t)$ is considered as the value of a variable at time $t$, unless otherwise specified. We denote by $x_i(t)$ the $i$-th element of the vector $x(t)$. 

\begin{definition}
    Suppose $x(t),y(t)\in\mathbb R^n$ are real-valued vectors where $n\in\mathbb Z_{>0}$. We write $x(t)\to^T y(t)$ if there exists a time $T\geq 0$ such that $\lim_{t\to T} x(t) = y(t)$ and $x(t) = y(t)$ $\forall t\geq T$.
\end{definition}

\begin{definition}
    The signum function is defined by $\text{sign}(x)=1$ if $x>0$, $\text{sign}(x)=0$ if $x=0$, and $\text{sign}(x)=-1$ otherwise. The sigmoid-like function is defined by $\text{sig}(x)=1$ if $x=0$ and $\text{sig}(x)=0$ otherwise.
\end{definition}

\begin{definition}
     Suppose $x\in\mathbb R^n$ and $n\in\mathbb Z_{>0}$. The projection of $x\in\mathbb R^n$ onto a closed convex set $\Omega\subseteq\mathbb{R}^n$ is defined as $\mathcal P_{\Omega}(x) = \arg\min_{y\in\Omega}\Vert x-y\Vert_2$. The projection is unique since $\Omega$ is closed and convex. 
\end{definition}

\begin{lemma}
Consider the indicator function
\[
\mathbb I(x)=
\begin{cases}
0,& x\in\Omega,\\
+\infty,& \text{otherwise},
\end{cases}
\]
and let $g(\epsilon,x):\mathbb R_{>0}\times\mathbb R\to\mathbb R$ be a function that is twice continuously differentiable and convex with respect to $x$, and satisfies $\lim_{\epsilon\to+\infty}g(\epsilon,x)=\mathbb I(x)$ point-wise. Then it holds that $\lim_{\epsilon\to+\infty}g_{xx}(\epsilon,x)=0$ if $x\in\mathrm{Int}(\Omega)$ and $\lim_{\epsilon\to+\infty}g_{xx}(\epsilon,x)=+\infty$ if $x\in\partial\Omega$, where $g_{xx}(\epsilon,x)$ denotes the second derivative of $g(\epsilon,x)$ with respect to $x$.
\end{lemma}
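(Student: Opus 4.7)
The plan is to handle the two regimes---interior and boundary---separately, relying on Taylor's theorem and convexity of $g$ in $x$.

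For $x_0\in\mathrm{Int}(\Omega)$, I would pick $\delta>0$ small enough that $[x_0-\delta,x_0+\delta]\subset\Omega$, so that the pointwise hypothesis forces $g(\epsilon,x_0)$ and $g(\epsilon,x_0\pm\delta)$ all to tend to $\mathbb I(\cdot)=0$. Taylor's theorem at $x_0$ (separately for $+\delta$ and $-\delta$, then summed) yields
\[
g(\epsilon,x_0+\delta)+g(\epsilon,x_0-\delta)-2g(\epsilon,x_0)=\tfrac{1}{2}\delta^2\bigl(g_{xx}(\epsilon,\xi_+^\epsilon)+g_{xx}(\epsilon,\xi_-^\epsilon)\bigr)
\]
for some $\xi_\pm^\epsilon\in(x_0-\delta,x_0+\delta)$. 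The left-hand side vanishes as $\epsilon\to+\infty$, while convexity forces each summand on the right to be non-negative; hence $g_{xx}(\epsilon,\xi_\pm^\epsilon)\to 0$. Shrinking $\delta$ and invoking continuity of $g_{xx}(\epsilon,\cdot)$ would then pin this vanishing at the point $x_0$ itself.

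For $x_0\in\partial\Omega$, closedness and convexity of $\Omega$ imply that $x_0$ sits at an end of a supporting interval, so there is a direction---say $+h$ with $h>0$---along which $x_0+h\notin\Omega$. On that side the hypothesis gives $g(\epsilon,x_0+h)\to+\infty$, while $g(\epsilon,x_0),g(\epsilon,x_0-h)\to 0$ (since both points lie in $\Omega$). Convexity then supplies the secant bounds
\[
g_x(\epsilon,x_0+h)\geq\frac{g(\epsilon,x_0+h)-g(\epsilon,x_0)}{h}\to+\infty,\qquad g_x(\epsilon,x_0-h)\leq\frac{g(\epsilon,x_0)-g(\epsilon,x_0-h)}{h}\to 0,
\]
so the gap between these two gradients diverges. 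Applying the mean value theorem to $g_x$ produces $\eta^\epsilon\in(x_0-h,x_0+h)$ with $2h\,g_{xx}(\epsilon,\eta^\epsilon)\to+\infty$, and letting $h\downarrow 0$ together with continuity of $g_{xx}(\epsilon,\cdot)$ transfers the blow-up to $x_0$.

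The main obstacle in both cases is the interchange of the limit $\epsilon\to+\infty$ with the spatial step that drags the auxiliary points $\xi_\pm^\epsilon,\eta^\epsilon$---which themselves drift with $\epsilon$---back to $x_0$. Justifying this step cleanly is where the $C^2$ regularity, convexity, and the barrier/penalty structure that $g$ inherits from approximating $\mathbb I$ have to be used simultaneously rather than sequentially.
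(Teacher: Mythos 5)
Your toolkit (secant inequalities from convexity, Taylor/mean-value expansions, the interior/boundary dichotomy) is the same as the paper's, but what you have written is a plan with a hole in it, and the hole is exactly the step you flag as ``the main obstacle.'' In the interior case you obtain $g_{xx}(\epsilon,\xi_\pm^\epsilon)\to 0$ at intermediate points that drift with $\epsilon$, and in the boundary case you obtain blow-up of $g_{xx}(\epsilon,\eta^\epsilon)$ at a drifting $\eta^\epsilon$; in neither case does ``shrink $\delta$ (or $h$) and invoke continuity of $g_{xx}(\epsilon,\cdot)$'' transfer the conclusion to $x_0$, because that continuity holds only for each fixed $\epsilon$ and you have no modulus of continuity uniform in $\epsilon$. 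You are implicitly interchanging $\lim_{\epsilon\to+\infty}$ with $\lim_{\delta\downarrow 0}$, and nothing in the hypotheses licenses this. Worse, the obstacle is not merely technical: with only pointwise convergence to $\mathbb I$, the localization genuinely fails. For instance, adding to a standard smooth barrier a convex bump $a_\epsilon\,\chi(x/w_\epsilon)$ (with $\chi$ smooth, convex, even, $\chi(u)=\vert u\vert$ for $\vert u\vert\geq 1$, $\chi''(0)>0$) and choosing $a_\epsilon/w_\epsilon\to 0$ but $a_\epsilon/w_\epsilon^2$ bounded away from zero preserves pointwise convergence to $\mathbb I$ yet keeps $g_{xx}(\epsilon,x_0)$ from vanishing at the bump's center $x_0\in\mathrm{Int}(\Omega)$. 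So your second-difference estimate, which only sees $g$ at three points, cannot be upgraded to a pointwise statement at $x_0$ without strengthening the hypotheses (e.g., a specific barrier structure, monotone or locally uniform convergence, or a uniform bound relating $g_{xx}$ on a neighborhood to its value at $x_0$).

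For comparison, the paper asserts the interior case without argument and proves the boundary case by contradiction: assuming $\sup_{\epsilon\geq\epsilon_0}g_{xx}(\epsilon,x_0)\leq M$, it expands $g(\epsilon,x_0+\delta)=g(\epsilon,x_0)+g_x(\epsilon,x_0)\delta+\tfrac12 g_{xx}(\epsilon,\xi)\delta^2$ toward an exterior point, bounds $g_x(\epsilon,x_0)\delta$ via the convexity secant inequality, and contradicts $\mathbb I(x_0+\delta)=+\infty$. Your boundary argument is the direct (contrapositive) form of the same idea, and both versions share the same weak point: the second derivative that is actually controlled or needed sits at an intermediate point, not at $x_0$ (the paper silently applies the bound $M$ at $\xi\neq x_0$). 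If you want a defensible proof along these lines, run the contradiction argument under a boundedness assumption on $g_{xx}(\epsilon,\cdot)$ over a whole neighborhood of $x_0$, and state explicitly whatever additional regularity of the family $\{g(\epsilon,\cdot)\}$ you need to pass from that neighborhood statement to the pointwise claim at $x_0$.
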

\begin{proof}
The proof is provided in Appendix A.
\end{proof}

\begin{lemma}\cite{boyd2004convex}
    Consider a closed convex set $\Omega\subset\mathbb R^n$. For all $x\in\mathbb R^n$ and $y\in\Omega$, the following inequality holds for the projection operation:
    \begin{align}
    \langle x-\mathcal P_{\Omega}(x),y-\mathcal P_{\Omega}(x)\rangle \leq 0.\notag
    \end{align}
\end{lemma}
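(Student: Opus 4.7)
The plan is to prove the inequality using the first-order optimality condition associated with the definition of the projection. By Definition~3, $\mathcal{P}_{\Omega}(x)$ is the unique minimizer over $\Omega$ of the (strongly convex, smooth) function $z\mapsto\tfrac{1}{2}\Vert x-z\Vert_2^{2}$. So the strategy is to perturb $\mathcal{P}_{\Omega}(x)$ along a feasible direction toward an arbitrary $y\in\Omega$ and exploit the fact that the squared-distance cannot decrease.

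Concretely, given $y\in\Omega$, I would introduce the one-parameter family
\[
z(\lambda) \;=\; \mathcal{P}_{\Omega}(x) + \lambda\bigl(y-\mathcal{P}_{\Omega}(x)\bigr),\qquad \lambda\in[0,1].
\]
Convexity of $\Omega$ (together with $\mathcal{P}_{\Omega}(x),y\in\Omega$) ensures $z(\lambda)\in\Omega$ for every $\lambda\in[0,1]$, so $z(\lambda)$ is a feasible perturbation. Define $\phi(\lambda)=\tfrac{1}{2}\Vert x-z(\lambda)\Vert_2^{2}$. By the minimizing property of the projection, $\phi(\lambda)\geq\phi(0)$ on $[0,1]$, hence the right derivative at $\lambda=0$ must satisfy $\phi'(0^{+})\geq 0$.

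A direct computation gives
\[
\phi'(\lambda) \;=\; \bigl\langle z(\lambda)-x,\;y-\mathcal{P}_{\Omega}(x)\bigr\rangle,
\]
so evaluating at $\lambda=0$ yields
\[
0 \;\leq\; \phi'(0^{+}) \;=\; \bigl\langle \mathcal{P}_{\Omega}(x)-x,\;y-\mathcal{P}_{\Omega}(x)\bigr\rangle,
\]
which is exactly the claimed inequality after multiplying by $-1$. Since $y\in\Omega$ was arbitrary, the statement holds for all such $y$, completing the argument.

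There is no real obstacle here; the only point requiring a touch of care is that $\lambda$ ranges over $[0,1]$ rather than an open interval around $0$, so one must use the one-sided (right) derivative condition $\phi'(0^{+})\geq 0$ rather than setting $\phi'(0)=0$. Because $\phi$ is a quadratic in $\lambda$, its right derivative coincides with the ordinary derivative and the computation above is unambiguous.
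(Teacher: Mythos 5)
Your argument is correct and complete: the paper states this lemma as a cited result from \cite{boyd2004convex} without providing its own proof, and your variational argument (feasible perturbation $z(\lambda)$ toward $y$, nonnegativity of the one-sided derivative of the squared distance at $\lambda=0$) is precisely the standard proof of the projection inequality found in that reference. Your remark about using the right derivative on $[0,1]$ rather than a stationarity condition is the right point of care, so nothing is missing.
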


\subsection{Grid Frequency Dynamics}
This section describes the grid frequency dynamics, which, in the context of this work, are interpreted as those of an exosystem. Consider a power system with an ARU composed of $n$ assets deployed at one or more points of common coupling. The frequency dynamics are described by the following set of equations \cite{kundur,6702462}:
\begin{align}
    \dot\delta_k(t) &= \Delta\omega_k(t) = \omega_k(t)-\omega^\star,\label{delta}\\
    \dot\omega_k(t) &= -\frac{\omega^\star}{2H_k}(D_k\Delta\omega_k(t) + \hat P_k(t) + \sum_{l\in\mathcal B_k}P_{kl}(t)),\label{omega}\\
    \hat P_k(t) &= P_{k,L}(t) - P_{k,G}(t) - \sum_{i\in\mathcal A_k}P_i(t),\label{net}\\
    P_{kl}(t) &= V_k(t)V_l(t)\vert Y_{kl}\vert\cos(\delta_k(t)-\delta_l(t)-\phi_{kl}),\label{flow}
\end{align}
where the subscripts $kl$ and $l$ refer to the $k$-th and $l$-th buses, respectively, and $i$ refers to the $i$-th asset.
Specifically, $\delta_k(t)$ is the phase angle at time $t$, $\Delta\omega_k(t)$ the frequency deviation, $\omega_k(t)$ the angular frequency, $\omega^\star$ the nominal frequency, and $V_k(t)$ the voltage magnitude. $H_k$ and $D_k$ represent the inertia and damping, respectively. $P_{k,G}(t)$ denotes the power generation, $P_{k,L}(t)$ the load demand, and $P_{i}(t)$ the power output of the $i$-th asset. $\mathcal A_k$ denotes the set of assets deployed at the $k$-th bus, and $\mathcal B_k$ denotes the set of buses connected to the $k$-th bus. The power flow from the $k$-th bus to the $l$-th bus is denoted by $P_{kl}(t)$, $Y_{kl}$ is the line admittance between the two buses, and $\phi_{kl}$ is the admittance phase angle \cite{kundur}.

The Center of Inertia (COI) is a commonly used concept for analyzing the overall dynamic behavior of a power system \cite{8523814}.The angular frequency at the COI is defined as the inertia-weighted average of individual bus frequencies:
\begin{align}\label{COI}
    \Delta\omega_0(t) = \frac{\sum_{k=1}^mH_k\omega_k(t)}{\sum_{k=1}^mH_k} - \omega^\star,
\end{align}
where $\Delta\omega_0(t)$ denotes the frequency deviation at the COI. It will be treated as a disturbance generated by the exosystem and as an exogenous input to the upcoming optimization problems. The framework to be developed is model-free with respect to (\ref{delta})–(\ref{COI}), leveraging the exosystem and, more specifically, frequency measurements.
%The difference between local and COI frequencies can be well estimated by an exponentially decaying sinusoid \cite{9162502}.

\begin{remark}
With the increasing deployment of Phasor Measurement Units (PMUs), synchrophasor data can now be sampled nearly 100 times faster than with traditional SCADA systems. Data from multiple PMUs are collected by a phasor data concentrator, which performs time alignment, noise filtering, and cleansing of corrupted data. In addition to voltage and current phasors, PMUs also measure frequency and its time derivative, enabling consistent aggregation over wide geographic areas and providing system operators with real-time access to $\Delta\omega_0(t)$ and $\Delta\dot\omega_0(t)$, which can be then broadcast to available response units \cite{6582699}.
\end{remark}

\section{Frequency-Varying Optimization}
\subsection{Frequency-Varying Equality Constraint}
Let $c_{agg}$ denote the aggregate contracted quantity of the ARU. The contract is submitted in advance and remunerated based on availability. In response to frequency events, each asset must adjust its power output relative to its operational baseline, and the delivered quantity of each asset is
\begin{align}\label{x}
    x_i(t) = P_i(t) - P_i(0),
\end{align}
where $P_i(t)$ is the power output of the $i$-th asset, and $P_i(0)$ represents its operational baseline submitted to NESO. For a BESS, negative values of $P_i(t)$ indicate charging, while positive values indicate discharging. The baseline $P_i(0)$ is typically scheduled by the Battery Management System (BMS) for purposes such as energy arbitrage and State-of-Charge (SoC) recovery \cite{8666790} and is not expected to change over very short timescales. The power output of assets are subject to state constraints, characterized by the set
\begin{align}\label{domain}
    \Omega_i = \{x_i(t)\mid P_i^\text{min}\leq P_i(0) + x_i(t)\leq P_i^\text{max}\},
\end{align}
where $P_i^\text{min}$ and $P_i^\text{max}$ are the minimum and maximum power limits for $P_i(t)$, respectively.

To collectively meet the contractual commitment, the delivered quantity should match the required quantity, which varies with $\Delta\omega_0(t)$ from 0\% to $\pm$100\% of the aggregate contracted quantity. This yields a frequency-varying equality constraint:
\begin{align}\label{equality constraint}
    \sum_{i=1}^n x_i(t) = h(\Delta\omega_0(t))\cdot c_{agg},
\end{align}
where $\sum_{i=1}^n x_i(t)$ is the overall response of the ARU, $h(\Delta\omega_0(t)):\mathbb R\to\mathbb R$ represents the delivery requirement curve as depicted in Fig. 1, and $h(\Delta\omega_0(t))\cdot c_{agg}$ is the required quantity of DR/DM/DC, calculated using frequency measurement. 

We illustrate the frequency-varying nature of (\ref{equality constraint}) by taking DM as an example. DM operates over a frequency deviation range of $\pm$0.2 Hz, with a dead-band of $\pm$0.015 Hz during which response is not triggered. Sublinear delivery is required for deviations between 0.015 Hz and 0.1 Hz, as well as between -0.1 Hz and -0.015 Hz. The knee points, occurring at $\pm$0.1 Hz, correspond to a response level of 5\% and -5\%, beyond which linear delivery begins. Linear delivery continues until the saturation points at $\pm$0.2 Hz, where full delivery ($\pm$100\%) is required.

\subsection{Problem Setup}
Each asset is supposed to have an cost function, denoted by $f_i(x_i(t), \Delta\omega_0(t))$, which may vary with frequency, though this is not necessarily the case. The optimal coordination of assets within the ARU is formulated as the following FVO problem. By replacing the delivery requirement curve, this formulation can be applied to DR, DM, or DC, triggering the specific one as needed:
\begin{optimization}{Problem: FVO-1}
\begin{align}
    &\underset{x(t)\in\mathbb R^n}{\text{min}}\sum_{i=1}^nf_i(x_i(t),\Delta\omega_0(t)),\notag\\
    &\text{s.t. }\sum_{i=1}^n x_i(t) = h(\Delta\omega_0(t))\cdot c_{agg},\notag\\
    &\quad x_i(t)\in\Omega_i\ \forall i=1,\dots,n,\notag
\end{align}
\text{with $\Delta\omega_0(t)$ being an exogenous input to the problem.}
\end{optimization}
The cost functions in power system applications are typically modeled as quadratic functions \cite{6980137,performance_optimization,9525453}, implying strong convexity. We assume strong duality holds (\textit{e.g.}, in convex problems satisfying Slater’s condition), under which a solution to the dual problem exists and the primal and dual problems have zero duality gap. This assumption is standard in the literature and fundamental for primal-dual methods \cite{6980137}.

\begin{assumption}\label{asp:convex}
For each $i\in\mathcal N$, the cost function $f_i(x_i,\Delta\omega_0)$ is twice continuously differentiable and strongly convex with respect to $x_i$, and continuously differentiable with respect to $\Delta\omega_0$.
\end{assumption}

\begin{assumption}\label{asp:slater}
The Slater’s condition holds for all $t\geq 0$, \textit{i.e.}, there exists at least one interior point satisfying (\ref{equality constraint}).
\end{assumption}

In the rest of this paper, we may omit ``$(t)$'' in derivations and analyses for notational simplicity. Let $f_{ix}(x_i,\Delta\omega_0):\Omega_i\times\mathbb R\to\mathbb R$ be the partial derivative of $f_i(x_i,\Delta\omega_0)$ with respect to $x_i$, $f_{ixx}(x_i,\Delta\omega_0)\in\Omega_i\times\mathbb R\to\mathbb R_{>0}$ and $f_{ix\omega}(x_i,\Delta\omega_0):\Omega_i\times\mathbb R\to\mathbb R$ be the partial derivatives of $f_{ix}(x_i,\Delta\omega_0)$ with respect to $x_i$ and $\Delta\omega_0$. Let $h_\omega:\mathbb{R} \to \mathbb{R}_{\leq 0}$ denote the derivative of $h(\Delta\omega_0)$, a piecewise constant function of $\Delta\omega_0$. 

%With phasor measurement units and related estimation techniques, $\Delta\omega_0$ is measurable, and $\Delta\dot\omega_0$ and $\Delta\ddot\omega_0$ can be computed from it \cite{9253523}.
We now present the existence and uniqueness of the optimal solution to Problem FVO-1 at time $t$, along with a characterization of its dynamics that forms the basis for our feedforward design. Due to the state constraints, the optimal trajectory is piecewise continuously differentiable and may fail to be differentiable at the switching instants---specifically, when transitioning between the interior $\mathrm{Int}(\Omega_i)$ and the boundary $\partial\Omega_i$ of the closed convex set $\Omega_i$. 
\begin{theorem}
Suppose that Assumptions 1--2 hold. Problem FVO-1 admits a unique optimal solution at each time $t$. Let $x^\star$ denote the optimal solution and $\lambda^\star$ the corresponding optimal Lagrange multiplier. Both $x^\star$ and $\lambda^\star$ are piecewise continuously differentiable, and for almost all $t\geq 0$,
\begin{align}
    \dot x_i^\star &= -\rho_i^\star\left[f_{ix\omega}(x_i^\star,\Delta\omega_0)\Delta\dot\omega_0 + \dot\lambda^\star\right],\label{optimal trajectory x}\\
    \dot\lambda^\star &= -\left(\sum_{i=1}^n\rho_i^\star\right)^{\dagger}\sum_{i=1}^n\rho_i^\star f_{ix\omega}(x_i^\star,\Delta\omega_0)\Delta\dot\omega_0 \notag\\
    &\quad -\left(\sum_{i=1}^n\rho_i^\star\right)^{\dagger}h_\omega(\Delta\omega_0)\Delta\dot\omega_0c_{agg},\label{optimal trajectory la}
\end{align}
where $\rho_i^\star$ denotes $\rho_i$ evaluated at $x_i = x_i^\star$, with $\rho_i:\Omega_i\times\mathbb R\to\mathbb R_{\geq0}$ given by
\begin{align}\label{rho}
    \rho_i = \sigma_if_{ixx}(x_i,\Delta\omega_0)^{-1},
\end{align}
and $\sigma_i\in\{0,1\}$ is a state-dependent switching signal defined as $\sigma_i=1$ if $x_i\in\mathrm{Int}(\Omega_i)$ and $\sigma_i=0$ if $x_i\in\partial\Omega_i$.
\end{theorem}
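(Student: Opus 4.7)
The plan is to establish the three claims---existence and uniqueness at each $t$, piecewise $C^1$ regularity of the primal-dual trajectory, and the explicit dynamics (\ref{optimal trajectory x})--(\ref{optimal trajectory la})---through a regularization argument built on Lemma 1. For the first claim, at any fixed time $t$ the feasible set is the intersection of the hyperplane in (\ref{equality constraint}) with the compact product box $\prod_{i=1}^n \Omega_i$; by Assumption \ref{asp:slater} it is non-empty with an interior point, and by Assumption \ref{asp:convex} the objective is strongly convex in $x$, hence a unique minimizer $x^\star(t)$ exists. Slater's condition together with strong duality then guarantees the existence of a (scalar) optimal multiplier $\lambda^\star(t)$ for the equality constraint.

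To derive (\ref{optimal trajectory x})--(\ref{optimal trajectory la}), I would add to each $f_i(\cdot,\Delta\omega_0)$ a twice-continuously-differentiable convex penalty $g(\epsilon, x_i)$ approximating the indicator of $\Omega_i$, producing a regularized problem subject only to the equality constraint. Its KKT stationarity
\begin{equation*}
    f_{ix}(x_i^{\star,\epsilon},\Delta\omega_0) + g_x(\epsilon, x_i^{\star,\epsilon}) + \lambda^{\star,\epsilon} = 0
\end{equation*}
together with primal feasibility admits a $C^1$ primal-dual curve in $t$ by the implicit function theorem, since the Jacobian is non-singular (each $f_i+g(\epsilon,\cdot)$ has strictly positive second derivative). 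Differentiating these identities in $t$ and solving the resulting linear system yields a pre-limit analogue of (\ref{optimal trajectory x})--(\ref{optimal trajectory la}) with weights
\begin{equation*}
    \rho_i^{\star,\epsilon} = [f_{ixx}(x_i^{\star,\epsilon},\Delta\omega_0) + g_{xx}(\epsilon, x_i^{\star,\epsilon})]^{-1}.
\end{equation*}

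Passing to $\epsilon \to \infty$, Lemma 1 gives $g_{xx}\to 0$ on $\mathrm{Int}(\Omega_i)$ and $g_{xx}\to +\infty$ on $\partial\Omega_i$, so $\rho_i^{\star,\epsilon}\to \sigma_i f_{ixx}^{-1}=\rho_i^\star$, recovering (\ref{rho}). The pseudoinverse $\dagger$ in (\ref{optimal trajectory la}) naturally absorbs the degenerate configuration $\sum_i\rho_i^\star=0$ in which every asset is boundary-active. Between switching instants---where some component transitions between $\mathrm{Int}(\Omega_i)$ and $\partial\Omega_i$---the activity pattern $\sigma$ is locally constant, so the limiting dynamics are $C^1$; continuity of $x^\star(t)$ and $\lambda^\star(t)$ across switching instants, which follows from Berge's maximum theorem applied to the parametric optimization, then yields the piecewise $C^1$ claim and the ``almost all $t$'' qualifier.

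The main obstacle will be justifying the exchange of limit and time-derivative in the regularized dynamics, which requires locally uniform convergence of $\rho_i^{\star,\epsilon}$ on compact subintervals of constant activity pattern; a secondary subtlety is showing that the switching set has Lebesgue measure zero. I would argue that, under the regularity of $f_i$, $h$, and $\Delta\omega_0(t)$ inherited from Assumption \ref{asp:convex} and the exosystem dynamics (\ref{delta})--(\ref{COI}), the times at which $x_i^\star$ touches $\partial\Omega_i$ are at most countable on any compact interval, producing the required measure-zero exceptional set and legitimizing the ``almost all $t$'' conclusion.
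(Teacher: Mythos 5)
Your proposal follows essentially the same route as the paper's own proof: existence and uniqueness from strong convexity and compactness of the feasible set, and the trajectory dynamics via a smooth penalty approximation of the indicator of $\Omega_i$, differentiation of the regularized KKT conditions in time, and passage to the limit $\epsilon\to+\infty$ using Lemma 1 to obtain the weights $\rho_i^\star$. The only substantive difference is that you explicitly flag the interchange of the $\epsilon$-limit with the time derivative and the measure of the switching set as points requiring justification, which the paper's proof passes over silently; this is a point in your favor rather than a gap.
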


\begin{proof}
The proof is provided in Appendix B. We note that $f_{ixx}(x_i,\Delta\omega_0)^{-1}$ exists and is upper bounded due to the strong convexity of $f_i(x_i,\Delta\omega_0)$. The inverse of the sum $\sum_{i=1}^n \rho_i$ is upper bounded when $\sum_{i=1}^n \rho_i\neq 0$. When $\sum_{i=1}^n \rho_i = 0$, it follows that $\dot x_i^\star = 0$ and $\dot\lambda^\star = 0$. In the case $\sum_{i=1}^n \rho_i = 0$, we have $\dot x_i^\star = 0$ and $\dot\lambda^\star = 0$. Hence the pseudoinverse $(\cdot)^\dagger$ is used to ensure that (\ref{optimal trajectory x})–(\ref{optimal trajectory la}) are well defined.
\end{proof}

\subsection{Problem Reformulations}
In Problem FVO-1, (\ref{equality constraint}) entails instantaneous service delivery, which is generally infeasible in any optimization or control framework. Instead, NESO specifies a maximum delivery time $T_{del}^{\max}$, which in this sense represents an upper bound on the time for the state trajectory to converge to the optimal trajectory. Suppose $x_i(t)$ is adjusted via
\begin{align}\label{first-order}
    \dot x_i(t) = u_i(t),
\end{align}
where $u_i(t)$ is the optimization signal. Let $T$ denote the convergence time, and $T^{\max}$ its theoretical upper bound. For implementation purposes, we introduce the following TOT problem whose goal is to find and track the unknown optimal trajectory of Problem FVO-1:
\begin{optimization}{Problem: TOT-1}
For $i\in\mathcal N$, design $u_i(t)$ under (\ref{first-order}) such that
\begin{align}
    &u_i(t) \to^T \dot x_i^\star(t),\notag\\
    &x_i(t) \to^T x_i^\star(t),\ \ x_i(t)\in\Omega_i\notag\\
    &T \leq T^{\max} \leq T_{del}^{\max},\notag
\end{align}
where $x^\star(t)$ is the optimizer of Problem FVO-1 at time $t$.
\end{optimization}

\begin{remark}
    It is worth noting that (\ref{optimal trajectory x})--(\ref{optimal trajectory la}) have discontinuous right-hand sides. Accordingly, the solution of (\ref{first-order}) should be understood in the sense of Filippov solutions \cite{filippov2013differential}. This interprets $u_i(t)$ as a differential inclusion and its convergence to a set-valued map. Nevertheless, the common Lyapunov function approach and proofs analogous to those for smooth systems are still applicable, since the Lyapunov functions in this paper are continuous and monotonically decreasing at points of discontinuity.
\end{remark}

Recent auction results from NESO highlight the dominant role of BESSs in the new dynamic services \cite{NGESO_Auction}. Accordingly, we focus on BESSs as the representative assets in this paper. It is worth noting that the responses of assets are governed by inherent dynamics, which can be non-negligible even for BESSs. Their response times ranging from tens to hundreds of milliseconds may impair tracking performance if not properly addressed \cite{8424043}. With an algorithm that optimizes the coordination in real time, the closed-loop response (see Fig. 3) of the $i$-th BESS can be described as
\begin{align}
    \tau_i\dot x_i(t) &=  r_i(t)-x_i(t),\label{second-order-1}\\
    \dot r_i(t) &= u_i(t),\label{second-order-2}
\end{align}
where $\tau_i$ is a time constant determined by the equivalent impedance of the inverter-side filter and transformer; $x_i(t)\in\Omega_i$ is the delivered quantity, subject to state constraints; $r_i(t)\in\Omega_i$ is the control input, subject to input constraints; and $u_i(t)\in\mathbb R$ is the optimization signal for adjusting $r_i(t)$. The SoC dynamics are much slower and therefore omitted from the above.
\begin{figure}[!h]
	\centerline{\includegraphics[width=0.4\textwidth]{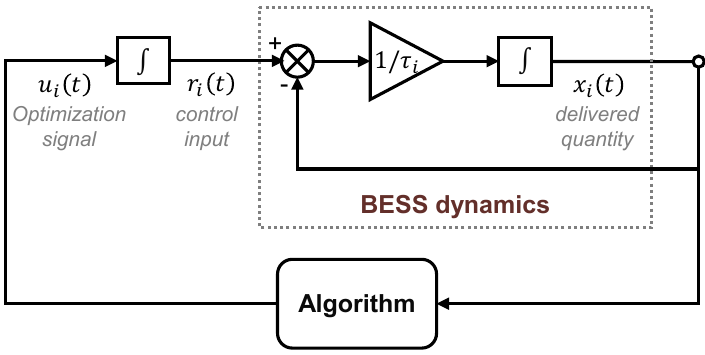}}
	\caption{Closed-loop response of BESS.}
\end{figure}
\begin{figure*}[h]
	\centerline{\includegraphics[width=0.7\textwidth]{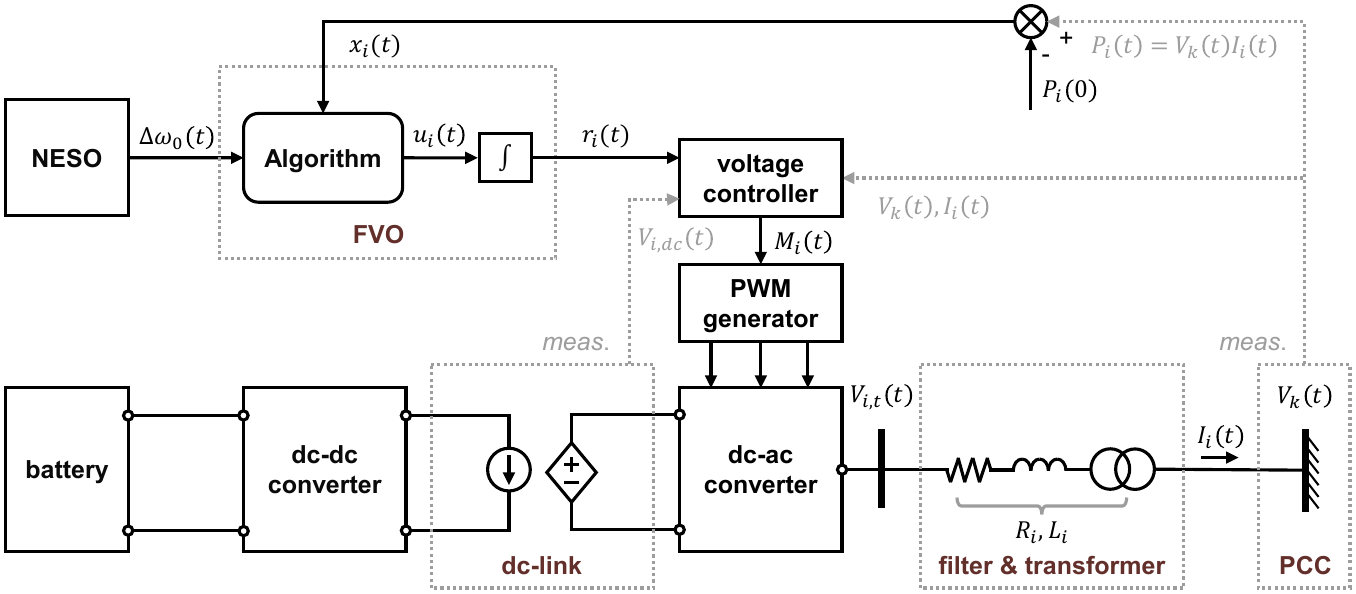}}
	\caption{Block diagram of BESS control system.}
	\label{power controller}
\end{figure*}

If $\tau_i$ is sufficiently small, $x_i(t) = r_i(t)$ holds approximately, and (\ref{second-order-1})--(\ref{second-order-2}) reduce to (\ref{first-order}). In practice, $\tau_i$ is not negligible, and the dynamics in (\ref{second-order-1}) must be explicitly taken into account. This makes TOT more challenging and motivates the developments in this paper. From (\ref{second-order-1})--(\ref{second-order-2}), it follows that $r_i(t)=\tau_i\dot x_i^\star(t) + x_i^\star(t)$ and $u_i(t) = \dot r_i(t)=\tau_i\ddot x_i^\star(t) + \dot x_i^\star(t)$ when $x_i(t) = x_i^\star(t)$. We arrive at the following TOT problem:
\begin{optimization}{Problem: TOT-2}
For $i\in\mathcal N$, design $u_i(t)$ under (\ref{second-order-1})--(\ref{second-order-2}) such that
\begin{align}
    &u_i(t) \to^T \tau_i\ddot x_i^\star(t) + \dot x_i^\star(t),\notag\\
    &r_i(t) \to^T \tau_i\dot x_i^\star(t) + x_i^\star(t),\ r_i(t)\in\Omega_i,\notag\\
    &x_i(t) \to^T x_i^\star(t),\ x_i(t)\in\Omega_i\notag\\
    &T \leq T^{\max} \leq T_{del}^{\max},\notag
\end{align}
where $x^\star(t)$ is the optimizer of Problem FVO-1 at time $t$.
\end{optimization}

\begin{remark}
As illustrated in Fig. 4, a BESS consists of battery modules, a dc-dc converter for adjusting voltage levels, and an inverter for grid interfacing. The circuit between the $i$-th BESS and its connected bus is described by $L_i\dot I_i(t) + R_i I_i(t) + V_k(t) = V_{i,t}(t)$ \cite{6397579}, where $I_i(t)$ is the output current, $V_{i,t}(t)$ is the terminal voltage magnitude, and $R_i$, $L_i$ denote the equivalent resistance and inductance of the inverter-side filter and transform, respectively \cite{9166575}. Multiplying both sides by $V_k(t)\cos(\phi_i)$ identifies the following expressions for (\ref{second-order-1}):
\begin{align}
    \tau_i &= L_i/R_i,\\
    r_i(t) &= [V_{i,t}(t)V_k(t)-V_k(t)^2]\cos(\phi_i)/R_i - P_i(0) \notag\\
    &\quad + \tau_i\dot V_k(t)I_i(t)\cos(\phi_i).\label{r}
\end{align}
In (\ref{r}), $V_{i,t}(t)$ is adjusted through Pulse Width Modulation (PWM), while $V_k(t)$, $I_i(t)$, and $V_{i,dc}(t)$ are locally measured. For actual implementation, $r_i(t)$ needs to be translated into an actuating signal---namely the modulation index $M_i(t)$ of the PWM generator \cite{9166575}:
\begin{align}
    M_i(t) = \sqrt{2}V_{i,t}(t)/(\sqrt{3}A_iV_{i,dc}(t)),\label{M}
\end{align}
where $A_i$ is a constant depending on the inverter topology, typically 0.5 or 1, and $V_{i,dc}(t)$ is the dc-link voltage. Based on (\ref{r})--(\ref{M}), $M_i$ can be computed from $r_i(t)$ and the local measurements. To decouple the control design for frequency and voltage regulation, it is common practice to approximate $V_k(t)$ by its nominal value \cite{6702462,7163587}.
\end{remark}

\section{Algorithmic Design}
\subsection{Algorithm for Problem TOT-1}
In this section, we present the algorithmic design for Problem TOT-1, followed by its convergence analysis. 

The Lagrangian function for Problem FVO-1 is
    \begin{align}\label{lagrangian}
        L &= \sum_{i=1}^n f_i(x_i,\Delta\omega_0) + \lambda\left(\sum_{i=1}^n x_i - h(\Delta\omega_0)c_{agg}\right),\ x_i\in\Omega_i,
    \end{align}
    whose partial derivatives can be obtained as
    \begin{align}
        \frac{\partial L}{\partial x_i} = f_{ix}(x_i,\Delta\omega_0)+\lambda,\quad 
        \frac{\partial L}{\partial\lambda} = \sum_{i=1}^n x_i - h(\Delta\omega_0)c_{agg}.\notag
    \end{align}
    At the optimal solution, the following hold for any $\kappa_x\in\mathbb R_{>0}$:
    \begin{align}
        0 &= x_i^\star - \mathcal P_{\Omega_i}(x_i^\star-\kappa_x\frac{\partial L}{\partial x_i}\mid_{x_i=x_i^\star,\lambda=\lambda^\star}),\label{optimality-1}\\
        0 &= \frac{\partial L}{\partial\lambda}\mid_{x=x^\star,\lambda=\lambda^\star}.\label{optimality-2}
    \end{align}
    These serve as necessary and sufficient conditions for the optimality of $x^\star$. In particular, (\ref{optimality-1}) stems from variational inequality theory and related optimization problems, and has been widely used in the design of projected dynamical systems. We refer interested readers to \cite{1288236,zeidler2013nonlinear} for background and details. 

    Inspired by (\ref{optimality-1}), which can be viewed as a sliding surface \cite{xu2008sliding} that the optimal trajectory stays on, we consider designing a feedforward driving term to drive the state to evolve along the optimal trajectory ``tangentially''. Meanwhile, a feedback driving term based on fixed-time control \cite{8322314} will be responsible for pushing the state onto the sliding surface ``perpendicularly'', as illustrated in Fig. 2. The algorithm for Problem TOT-1, which inherently supports dynamic environments such as the plug-in and plug-out of assets and other response units, is proposed as follows:
\begin{align}
    u_i &= -\underbrace{f_{ixx}(x_i,\Delta\omega_0)^{-1}(\gamma_{1}e_i^{1-\frac{p}{q}}+\gamma_{2}e_i^{1 + \frac{p}{q}}+\gamma_{3}\text{sign}(e_i))}_\text{FB driving term}\notag\\
    &\quad + \underbrace{\alpha_i\text{sig}(e_i)}_\text{FF driving term},\label{u single}\\
    e_i &= x_i - \mathcal P_{\Omega_i}(x_i - F_i),\label{e single}\\
    F_i &= \kappa_x\left[f_{ix}(x_i,\Delta\omega_0)+\lambda\right],\label{F single}\\
    \dot\lambda &= \kappa_\lambda\left(\sum_{i=1}^nx_i - h(\Delta\omega_0)c_{agg}\right) + \beta,\label{la single}
\end{align}
$x_i(0)\in\Omega_i\ \forall i\in\mathcal N$ and $\lambda(0)\in\mathbb R$. In the above, $e_i$ represents the error with respect to the stationarity condition in (\ref{optimality-1}) and is forced to zero within a fixed time through the feedback driving term, $F_i$ is a term related to gradient descent, $\gamma_{1},\gamma_{2},\gamma_{3}\in\mathbb R_{>0}$ are fixed-time control gains, $p,q\in\mathbb Z_{>0}$ are even/odd integers satisfying $p<q$, $\kappa_{x},\kappa_{\lambda}\in\mathbb R_{>0}$ are step-sizes, $\mathcal{P}_{\Omega_i}:\mathbb{R}\to\Omega_i$ denotes the projection operator, and $\alpha_i$, $\beta_i$ are the feedforward driving terms defined by (\ref{alpha})--(\ref{beta}).

\begin{lemma}\label{lemma feedforward single}
Suppose Assumptions 1--2 hold. Design the feedforward driving terms as
\begin{align}
    \alpha_i &= -\rho_i\left[f_{ix\omega}(x_i,\Delta\omega_0)\Delta\dot\omega_0 + \beta\right],\label{alpha}\\
    \beta &= -\left(\sum_{i=1}^n\rho_i\right)^{\dagger}\sum_{i=1}^n\rho_if_{ix\omega}(x_i,\Delta\omega_0)\Delta\dot\omega_0 \notag\\
    &\quad -\left(\sum_{i=1}^n\rho_i\right)^{\dagger}h_\omega(\Delta\omega_0)\Delta\dot\omega_0c_{agg},\label{beta}
\end{align}
where $\rho_i$ is defined as in Theorem 1. Then, $\alpha_i=\dot x_i^\star$ and $\beta=\dot\lambda^\star$ when $x=x^\star$. 
\end{lemma}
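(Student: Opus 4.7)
The plan is to verify the claim by direct substitution, since the feedforward terms in (\ref{alpha})--(\ref{beta}) are deliberately constructed to mirror the dynamics of the optimal trajectory already characterized in Theorem 1. No fresh derivation from the KKT conditions is needed; the substance of the work was done in Theorem 1, and here we only check that evaluation at $x=x^\star$ reproduces equations (\ref{optimal trajectory x})--(\ref{optimal trajectory la}).

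First I would evaluate $\rho_i$ and $f_{ix\omega}(x_i,\Delta\omega_0)$ at $x=x^\star$. By definition, $\rho_i|_{x_i=x_i^\star}=\sigma_i f_{ixx}(x_i^\star,\Delta\omega_0)^{-1}=\rho_i^\star$, and similarly $f_{ix\omega}(x_i,\Delta\omega_0)|_{x_i=x_i^\star}=f_{ix\omega}(x_i^\star,\Delta\omega_0)$. Substituting these into (\ref{beta}) gives
\begin{align}
\beta\big|_{x=x^\star} &= -\Bigl(\sum_{i=1}^n\rho_i^\star\Bigr)^{\dagger}\sum_{i=1}^n\rho_i^\star f_{ix\omega}(x_i^\star,\Delta\omega_0)\Delta\dot\omega_0\notag\\
&\quad -\Bigl(\sum_{i=1}^n\rho_i^\star\Bigr)^{\dagger}h_\omega(\Delta\omega_0)\Delta\dot\omega_0\,c_{agg},\notag
\end{align}
which coincides term-by-term with the right-hand side of (\ref{optimal trajectory la}). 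Hence $\beta|_{x=x^\star}=\dot\lambda^\star$.

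Using this identity, I would then substitute into (\ref{alpha}) to obtain
\begin{align}
\alpha_i\big|_{x=x^\star} &= -\rho_i^\star\bigl[f_{ix\omega}(x_i^\star,\Delta\omega_0)\Delta\dot\omega_0 + \dot\lambda^\star\bigr],\notag
\end{align}
which matches (\ref{optimal trajectory x}) exactly, giving $\alpha_i|_{x=x^\star}=\dot x_i^\star$. The only point requiring a brief comment is the use of the pseudoinverse $(\cdot)^\dagger$: when $\sum_{i=1}^n\rho_i^\star=0$ (every asset is saturated on $\partial\Omega_i$), Theorem 1 already stipulates $\dot x_i^\star=\dot\lambda^\star=0$, and the same convention yields $\beta|_{x=x^\star}=0$ and $\alpha_i|_{x=x^\star}=0$, so the equality persists. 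Since $x^\star$ and $\lambda^\star$ are only piecewise continuously differentiable, the matching is understood to hold almost everywhere, consistent with the statement of Theorem 1 and the Filippov interpretation noted in Remark 3. The main (very mild) obstacle is simply bookkeeping across the definition of $\rho_i$ at switching instants; there is no genuine difficulty beyond substitution.
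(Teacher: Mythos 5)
Your proposal is correct and follows exactly the same route as the paper, which likewise proves the lemma by substituting $x=x^\star$ into (\ref{alpha})--(\ref{beta}) and comparing term-by-term with (\ref{optimal trajectory x})--(\ref{optimal trajectory la}) from Theorem 1. Your additional remarks on the pseudoinverse convention and the almost-everywhere interpretation are consistent with the paper's handling of these points in the proof of Theorem 1 and Remark 3.
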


\begin{proof}
The proof follows straightforwardly by substituting $x=x^\star$ into (\ref{alpha})--(\ref{beta}) and comparing them with (\ref{optimal trajectory x})--(\ref{optimal trajectory la}).
\end{proof}

As shown in Fig. 5, the switching rule for $\sigma_i$ that determines $\rho_i$ is adjusted to synchronize with the activation of the projection operator, thereby avoiding unnecessary switching without altering the optimization results:
\begin{align}\label{sigma single}
\sigma_i =
\begin{cases}
    1 & \text{if } x_i-e_i\in\mathrm{Int}(\Omega_i),\\
    0 & \text{if } x_i-e_i\in\partial\Omega_i,
\end{cases}
\end{align} 
where $x_i-e_i=\mathcal P_{\Omega_i}(x_i-F_i)\in\Omega_i$. This adjustment is justified by fixed-time stability of $e_i$, which ensures that $e_i\to^T 0$ and $x_i-e_i\to^T x_i$. Thus, for $t\geq T$, the switching rule above is equivalent to the one in Theorem 1. Due to this equivalence, Lemma 3 remains valid and the optimality of the converged solution is preserved. A pseudocode for the proposed algorithms is also provided to aid comprehension.

\begin{figure}[h]	\centerline{\includegraphics[width=0.48\textwidth]{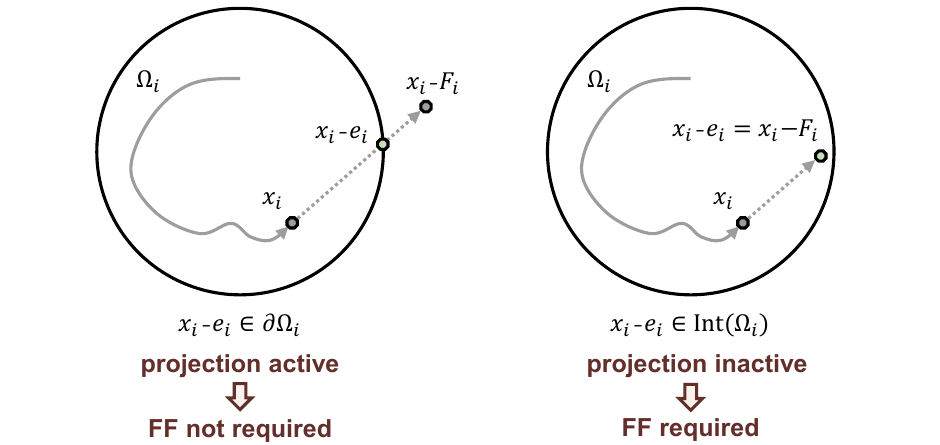}}
\caption{State-dependent switching rule based on projection.}
\end{figure}

\begin{algorithm}[t]
\caption{Algorithms for Problems TOT-1/TOT-2.}
\begin{algorithmic}[1]
\STATE Select control parameters according to $T_{del}^{\max}$;
\STATE Input $\Omega_i$, $x_i(0)\in\Omega_i$, and $c_{agg}$;
\STATE Set $\lambda(0)\in\mathbb R$ and $\sigma_i(0)\in\{0,1\}$;
\WHILE{there are no updates from BMS}
    \STATE Receive frequency measurements;
    \STATE Update required quantity $h_{\omega}(\Delta\omega_0)c_{agg}$;
    \STATE Update Lagrange multiplier $\lambda$ using (\ref{la single})/(\ref{la double});
    \STATE Update $e_i$ and $F_i$ using (\ref{e single})--(\ref{F single})/(\ref{e double})--(\ref{F double});
    \STATE Update state-dependent switching signal $\sigma_i$ using (\ref{sigma single})/(\ref{sigma double});
    \STATE Calculate feedforward driving terms using (\ref{alpha})--(\ref{beta})/(\ref{alpha''})--(\ref{beta'});
    \STATE Calculate and apply control signal $u_i$ according to (\ref{u single})/(\ref{u double});
    \STATE Observe delivered quantity $x_i$;
\ENDWHILE
\STATE Set $t\leftarrow 0$;
\STATE Go to 2;
\end{algorithmic}
\end{algorithm}

\begin{lemma}
For any $i\in\mathcal N$, if $x_i(0)\in\Omega_i$, then $x_i(t)\in\Omega_i$ and $(F_i-e_i)e_i\geq 0$ $\forall t\geq 0$.
\end{lemma}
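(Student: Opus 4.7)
My plan is to establish the two claims in the order stated, since the projection-inequality step needs forward invariance as a hypothesis. Because each $x_i$ is scalar and $\Omega_i$ is a closed interval $[\underline{x}_i,\overline{x}_i]$ with $\underline{x}_i=P_i^{\min}-P_i(0)$ and $\overline{x}_i=P_i^{\max}-P_i(0)$, invariance reduces to checking that on each boundary the velocity $\dot x_i = u_i$ does not point outward. I would argue by contradiction: assume $x_i(t^{\ast})>\overline{x}_i$ for some $t^{\ast}>0$. Continuity of $x_i$ (in the Filippov sense, cf.\ Remark 3) and $x_i(0)\in\Omega_i$ furnish a last crossing time $t_0<t^{\ast}$ with $x_i(t_0)=\overline{x}_i$ and $x_i(t)>\overline{x}_i$ on $(t_0,t^{\ast}]$.

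On this interval I would unpack $e_i=x_i-\mathcal P_{\Omega_i}(x_i-F_i)$ via the two subcases for the scalar projection: if $x_i-F_i\geq\overline{x}_i$ (which happens when $F_i\leq x_i-\overline{x}_i$, so in particular whenever $F_i\leq 0$), then $\mathcal P_{\Omega_i}(x_i-F_i)=\overline{x}_i$ and $e_i=x_i-\overline{x}_i>0$; otherwise $\mathcal P_{\Omega_i}(x_i-F_i)=x_i-F_i$ and $e_i=F_i>0$. Either way $e_i>0$ on $(t_0,t^{\ast}]$, so $\mathrm{sig}(e_i)=0$ kills the feedforward term, and the three monomials $\gamma_1 e_i^{1-p/q}$, $\gamma_2 e_i^{1+p/q}$, $\gamma_3\mathrm{sign}(e_i)$ are all non-negative with $\gamma_3\mathrm{sign}(e_i)=\gamma_3>0$. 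Strong convexity gives $f_{ixx}^{-1}>0$, so $u_i<0$ throughout. Integrating from $t_0$ forces $x_i(t^{\ast})<\overline{x}_i$, the desired contradiction. A mirror-image argument on $x_i(t^{\ast})<\underline{x}_i$ uses the symmetric projection cases to get $e_i<0$ and $u_i>0$. Hence $x_i(t)\in\Omega_i$ for all $t\geq 0$.

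For the second claim I would invoke Lemma 2 with $x\leftarrow x_i-F_i$, $y\leftarrow x_i$, and the identification $\mathcal P_{\Omega_i}(x_i-F_i)=x_i-e_i$. This is legitimate precisely because the invariance just proved guarantees $x_i\in\Omega_i$. The inequality in Lemma 2 then reads
\begin{equation*}
\langle (x_i-F_i)-(x_i-e_i),\, x_i-(x_i-e_i)\rangle \leq 0,
\end{equation*}
which simplifies to $\langle e_i-F_i,\, e_i\rangle\leq 0$, i.e., $(F_i-e_i)e_i\geq 0$, as required.

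The main obstacle I anticipate is the invariance step, because at a boundary point with $e_i=0$ the $\mathrm{sign}$ and $\mathrm{sig}$ terms are set-valued and one must ensure that no Filippov selection escapes $\Omega_i$. My resolution is to avoid analysing the boundary dynamics directly and instead work strictly on the forbidden region $\{x_i>\overline{x}_i\}$ (resp.\ $\{x_i<\underline{x}_i\}$), where the projection geometry removes all ambiguity in the sign of $e_i$ and the discontinuous terms become single-valued. Once that sign is pinned down the feedback term dominates and the viability conclusion follows cleanly; the projection inequality is then an immediate corollary.
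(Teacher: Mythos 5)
Your proof is correct, and Part II (the projection inequality via Lemma 2 with $y_i=x_i$) is essentially identical to the paper's. Part I, however, takes a genuinely different route. The paper argues at the first exit instant $t_i$ itself: it observes that in either case ($e_i(t_i)=0$ or $e_i(t_i)\neq 0$) the feedforward term vanishes, rewrites the feedback term as $u_i=-\eta_i e_i$ with $\eta_i\geq 0$ finite, and expresses $x_i(t_i+\delta)$ as a convex combination of $x_i(t_i)\in\partial\Omega_i$ and $\mathcal P_{\Omega_i}(x_i(t_i)-F_i(t_i))\in\Omega_i$, so that convexity of $\Omega_i$ forbids escape. You instead work strictly inside the forbidden region $\{x_i>\overline{x}_i\}$, use the interval structure of $\Omega_i$ to pin down $e_i\geq x_i-\overline{x}_i>0$ there, conclude $u_i<0$, and integrate back from the last crossing time. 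Your version is arguably cleaner in that it sidesteps the set-valued boundary dynamics entirely (the $u_i=-\eta_i e_i$ rewriting in the paper is delicate near $e_i=0$), but it is tied to $\Omega_i$ being a scalar interval, whereas the paper's convex-combination argument extends verbatim to general closed convex $\Omega_i$ in higher dimensions (which the paper alludes to in Remark 4). Two small points to tidy: your ``otherwise'' subcase omits the third possibility $x_i-F_i<\underline{x}_i$, where $\mathcal P_{\Omega_i}(x_i-F_i)=\underline{x}_i$ and $e_i=x_i-\underline{x}_i>0$ (the conclusion $e_i>0$ still holds, and indeed all three cases follow from the single observation $\mathcal P_{\Omega_i}(x_i-F_i)\leq\overline{x}_i<x_i$); and the integration yields $x_i(t^{\ast})\leq\overline{x}_i$ rather than a strict inequality, which still contradicts $x_i(t^{\ast})>\overline{x}_i$.
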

\begin{proof}
The proof is provided in Appendix C.
\end{proof}

\begin{remark}
    Observe that $(F_i-e_i)e_i \cdot (F_i-e_i)\text{sign}(e_i) = (F_i-e_i)^2\vert e_i\vert \geq 0$, where $(F_i-e_i)e_i\geq 0$ as shown in Lemma 4. Then, it follows directly that $(F_i-e_i)\text{sign}(e_i)\geq 0$. Furthermore, we have $(F_i-e_i)e_i^{1-\frac{p}{q}} = (F_i-e_i)\vert e_i\vert^{1-\frac{p}{q}}\text{sign}(e_i) \geq 0$ and $(F_i-e_i)e_i^{1+\frac{p}{q}} = (F_i-e_i)\vert e_i\vert^{1+\frac{p}{q}}\text{sign}(e_i) \geq 0$. These properties will be used in the subsequent analysis and also hold when $F_i$ and $e_i$ are vectors rather than scalars; however, this requires a slightly different argument to show.
\end{remark}

\begin{theorem}
Suppose that Assumptions 1--2 hold for Problem TOT-1, which is solved using Algorithm 1 defined by (\ref{u single})--(\ref{sigma single}). Choose $\gamma_1,\gamma_2>0$ such that $\pi q/(2\kappa_xp\sqrt{\gamma_{1}\gamma_{2}}) \leq T_{del}^{\max}$. If $\gamma_{3}$ is sufficiently large to satisfy $\gamma_{3}\geq \Vert\dot\lambda+f_{ix\omega}\Delta\dot\omega_0\Vert_2$ $\forall i\in\mathcal N$ $\forall t\geq 0$, then $x_i=x_i^\star$ $\forall i\in\mathcal N$ $\forall t\geq T$, where
\begin{align}\label{Tmax}
    T \leq T^{\max} = \pi q /(2\kappa_xp\sqrt{\gamma_{1}\gamma_{2}})\leq T_{del}^{\max}.
\end{align}
\end{theorem}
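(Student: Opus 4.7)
The plan is to apply a Lyapunov analysis to each error component $e_i(t)$ and then invoke the fixed-time stability estimate of \cite{8322314}, which guarantees a settling time of $\pi/(2\mu\sqrt{c_1 c_2})$ for $\dot V \leq -c_1 V^{1-\mu} - c_2 V^{1+\mu}$. The target inequality is $\dot V_i \leq -2^{1-\mu}\kappa_x\gamma_1 V_i^{1-\mu} - 2^{1+\mu}\kappa_x\gamma_2 V_i^{1+\mu}$ with $V_i = e_i^2/2$ and $\mu = p/(2q)$, since this gives $\sqrt{c_1 c_2} = 2\kappa_x\sqrt{\gamma_1\gamma_2}$ and a per-channel settling time $T_i \leq \pi q/(2p\kappa_x\sqrt{\gamma_1\gamma_2})$, so $T \leq \max_i T_i \leq T^{\max}$.

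The core step is computing $\dot V_i = e_i\dot e_i$ under (\ref{first-order}) and (\ref{u single})--(\ref{la single}). The analysis splits according to whether the projection in (\ref{e single}) is inactive ($x_i - F_i \in \mathrm{Int}(\Omega_i)$, so $e_i = F_i$) or active. In the inactive regime, $\dot e_i = \dot F_i = \kappa_x[f_{ixx}\dot x_i + f_{ix\omega}\Delta\dot\omega_0 + \dot\lambda]$; substituting the feedback portion of (\ref{u single})---the feedforward $\alpha_i\,\text{sig}(e_i)$ vanishes whenever $e_i \neq 0$---causes $f_{ixx}$ and $f_{ixx}^{-1}$ to cancel, leaving
\begin{align}
\dot e_i = -\kappa_x\!\left[\gamma_1 e_i^{1-p/q} + \gamma_2 e_i^{1+p/q} + \gamma_3 \text{sign}(e_i)\right] + \kappa_x\!\left[f_{ix\omega}\Delta\dot\omega_0 + \dot\lambda\right].\notag
\end{align}
Multiplying by $e_i$ and using Cauchy--Schwarz together with the hypothesis $\gamma_3 \geq \|\dot\lambda + f_{ix\omega}\Delta\dot\omega_0\|_2$ yields the target Lyapunov inequality. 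In the active regime I would rely on Lemma 4 and the auxiliary inequalities of Remark 4---specifically $(F_i-e_i)e_i^{1\pm p/q} \geq 0$ and $(F_i-e_i)\text{sign}(e_i) \geq 0$---to show that the projection residual contributes nonpositively to $e_i\dot e_i$, interpreting the trajectory as a Filippov solution (Remark 2) at switchings of $\sigma_i$ and on $\partial\Omega_i$.

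Once $V_i \equiv 0$ for $t \geq T$, every $e_i$ vanishes, so the projected stationarity condition (\ref{optimality-1}) is met, and the feedback term in (\ref{la single}) forces the equality constraint (\ref{optimality-2}); together with the uniqueness asserted in Theorem 1, this identifies $(x,\lambda) = (x^\star,\lambda^\star)$. From then on, $\text{sig}(e_i) = 1$ activates the feedforward, which by Lemma 3 supplies exactly $\dot x_i^\star$ and $\dot\lambda^\star$, so the pair stays on the optimal trajectory. Combined with the invariance $x_i(t) \in \Omega_i$ from Lemma 4 and the hypothesis $\pi q/(2\kappa_x p\sqrt{\gamma_1\gamma_2}) \leq T_{del}^{\max}$, the chain (\ref{Tmax}) follows.

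The hard part I anticipate is the active-projection case: the dynamics reduce to $\dot e_i = u_i$ rather than $\dot F_i$, so the tidy $f_{ixx}$-$f_{ixx}^{-1}$ cancellation is lost and the effective gain becomes $f_{ixx}^{-1}$ instead of $\kappa_x$. Reconciling the two rates needs either a strong-convexity bound on $f_{ixx}$ together with a compatibility condition relating $\kappa_x$ to $f_{ixx}$, or a weighted Lyapunov that absorbs $F_i - e_i$ using Remark 4. A secondary concern is certifying boundedness of $\dot\lambda$ so that the $\gamma_3$ hypothesis is achievable; this follows from $x_i \in \Omega_i$ (Lemma 4), piecewise constancy of $h_\omega$, and boundedness of $\Delta\dot\omega_0$ as an exogenous input from the exosystem.
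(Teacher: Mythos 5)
Your overall architecture matches the paper's: a per-channel Lyapunov argument driving $e_i$ to zero in fixed time via the settling-time estimate of \cite{8322314}, followed by identification of the converged point with $(x^\star,\lambda^\star)$ through stationarity, primal feasibility, and uniqueness, with the feedforward term (Lemma 3) keeping the state on the optimal trajectory thereafter. Your computation in the inactive-projection regime, including the use of $\gamma_3 \geq \Vert\dot\lambda + f_{ix\omega}\Delta\dot\omega_0\Vert_2$ to dominate the drift, reproduces exactly the bound the paper obtains for $\dot W_1+\dot W_2+\dot W_3$ in that regime.

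The genuine gap is the active-projection regime, which you explicitly leave open, and your chosen Lyapunov function $V_i = e_i^2/2$ cannot close it: as you correctly diagnose, when the projection saturates the error dynamics become $\dot e_i = u_i$, the $f_{ixx}$--$f_{ixx}^{-1}$ cancellation is lost, and the decay rate is governed by $f_{ixx}^{-1}$ rather than $\kappa_x$, so the quantitative bound $T^{\max} = \pi q/(2\kappa_x p\sqrt{\gamma_1\gamma_2})$ does not follow without a compatibility condition between $\kappa_x$ and $f_{ixx}$ that the theorem does not assume. The paper avoids the case split altogether by taking the regularized gap function $W = (F_i-e_i)e_i + \tfrac{1}{2}e_i^2 = -\min_{y_i\in\Omega_i}\bigl[-F_i(x_i-y_i)+\tfrac{1}{2}(x_i-y_i)^2\bigr]$ --- precisely the second of the two remedies you name but do not execute. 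Differentiating via the envelope theorem (the $y_i$-derivative vanishes at the minimizer) gives $\dot W = \kappa_x f_{ixx}e_iu_i + (F_i-e_i)u_i + \kappa_x(\dot\lambda + f_{ix\omega}\Delta\dot\omega_0)e_i$; the first term always yields $-\kappa_x(\gamma_1\vert e_i\vert^{2-\frac{p}{q}}+\gamma_2\vert e_i\vert^{2+\frac{p}{q}}+\gamma_3\vert e_i\vert)$ because the $f_{ixx}$ factors cancel by construction of $u_i$, and the residual $(F_i-e_i)u_i\leq 0$ by Lemma 4 and Remark 4, uniformly in whether the projection is active. So the missing ingredient is not a new estimate but the right merit function; until you carry that out (or impose and justify a bound tying $\kappa_x$ to $\inf f_{ixx}$), the settling-time claim is unproven on $\partial\Omega_i$.
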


\begin{proof}
Fixed any $i\in\mathcal N$. Consider the Lyapunov function
\begin{align}\label{lyapunov}
    W = (F_i-e_i)e_i + \frac{1}{2}e_i^2 \geq \frac{1}{2}e_i^2,
\end{align}
lower bounded as described in Lemma 4. We know $W=0$ if $e_i=0$, $W>0$ if $e_i\neq 0$, and $W\to+\infty$ if $e_i\to\infty$. For fixed-time stability of $e_i$, it remains to show that $\dot W$ is upper bounded as required for all $e_i\neq 0$.

To this end, we rewrite $W= -\text{min}_{y_i\in\Omega_i}W'$, where $W' = -F_i(x_i-y_i) + \frac{1}{2}(x_i-y_i)^2$ and the minimum is uniquely attained at $y_i=x_i-e_i$ \cite[Theorem 3.2]{fukushima1992equivalent}. Incorporate the fact that $\frac{\partial W}{\partial y_i}\mid_{y_i=x_i-e_i}=0$, the time derivative of $W$ can be obtained as
\begin{align}
    \dot W
    &= \underbrace{-\frac{\partial W'}{\partial x_i}\mid_{y_i=x_i-e_i} \dot x_i}_{\dot W_1} - \underbrace{\frac{\partial W'}{\partial\lambda}\mid_{y_i=x_i-e_i}\dot\lambda}_{\dot W_2}\notag\\
    &\quad \underbrace{-\frac{\partial W'}{\partial\Delta\omega_0}\mid_{y_i=x_i-e_i} \Delta\dot\omega_0}_{\dot W_3}.
\end{align}
We consider the case where $e_i\neq 0$ in the following derivations. The first term of $\dot W$ can be obtained as
\begin{align}\label{dV1i}
    \dot W_1
    &= \langle f_{ixx}(x_i,\Delta\omega_0)u_i,\kappa_xe_i\rangle  + \langle F_i - e_i,u_i\rangle \notag\\
    &= -\kappa_x(\gamma_{1}\vert e_i\vert^{2-\frac{p}{q}}+\gamma_{2}\vert e_i\vert^{2+\frac{p}{q}}+\gamma_{3}\vert e_i\vert)\notag\\
    &\quad - \gamma_{1}f_{ixx}(x_i,\Delta\omega_0)^{-1}(F_i-e_i)e_i^{1-\frac{p}{q}}\notag\\
    &\quad - \gamma_{2}f_{ixx}(x_i,\Delta\omega_0)^{-1}(F_i-e_i)e_i^{1+\frac{p}{q}}\notag\\
    &\quad - \gamma_{3}f_{ixx}(x_i,\Delta\omega_0)^{-1}(F_i-e_i)\text{sign}(e_i)\notag\\
    &\quad + (F_i-e_i)\alpha_i\text{sig}(e_i)\notag\\
    &\leq -\kappa_x(\gamma_{1}\vert e_i\vert^{2-\frac{p}{q}}+\gamma_{2}\vert e_i\vert^{2+\frac{p}{q}}+\gamma_{3}\vert e_i\vert),
\end{align}
where Remark 4 and $(F_i-e_i)\alpha_i\text{sig}(e_i)=0$ are used. The second and third terms of $\dot W$ are given by
\begin{align}
    \dot W_2 &= \kappa_x\kappa_\lambda \left(\sum_{i=1}^nx_i - h(\Delta\omega_0)c_{agg}\right)e_i + \kappa_x\beta e_i,\label{dV2i}\\
    \dot W_3 &= \kappa_xf_{ix\omega}(x_i,\Delta\omega_0)\Delta\dot\omega_0 e_i.\label{dV3i}
\end{align}
Provided that $\gamma_{3}\geq \Vert\dot\lambda+f_{ix\omega}\Delta\dot\omega_0\Vert_2$, we have
\begin{align}
    \dot W
    &\leq -\kappa_x(\gamma_{1}\vert e_i\vert^{2-\frac{p}{q}}+\gamma_{2}\vert e_i\vert^{2+\frac{p}{q}}).\label{dWi}
\end{align}
Recalling that $\sum_{i=1}^n e_i^2\leq 2W$, the summation of (\ref{dWi}) over $i$ gives rise to
\begin{align}
    \dot W \leq -(2^{1-\frac{p}{2q}}\kappa_x\gamma_{1})W^{1-\frac{p}{2q}}-(2^{1+\frac{p}{2q}}\kappa_x\gamma_{2})W^{1+\frac{p}{2q}}.
\end{align}
According to \cite[Theorem 3]{8322314}, and since $i$ was chosen arbitrarily,
\begin{align}
    e_i\to^T 0,\ x_i\to^T\alpha_i,\ \forall i\in\mathcal N,
\end{align}
where $T\leq T^{\max}\leq T_{del}^{\max}$. It is worth noting that $\alpha_i$ is locally Lipschitz continuous between consecutive switching instants for twice continuously differentiable convex $f_i(x_i,\Delta\omega_0)$.

Recall from (\ref{e single}) that $e_i = x_i-\mathcal P_{\Omega_i}(x_i-F_i)$. Hence, for all $i\in\mathcal N$ and $t\geq T$, there always exists $c_i \in \{y_i\in\mathbb R\mid x_i=\mathcal P_{\Omega_i}(x_i+y_i)\}$ such that
\begin{align}\label{stationarity}
    e_i = F_i + c_i = 0.
\end{align}
Furthermore, we have $\dot e_i = \dot F_i + \dot c_i = 0$ for all $t\geq T$, and $\dot c_i$ is well-defined except at the switching instants of $\sigma_i$ where non-differentiability arises. A candidate solution---subject to verification---is described by:
\begin{align}
    0 &= \sum_{i=1}^nx_i - h(\Delta\omega_0)c_{agg},\label{primal feasibility}\\
    0 &= f_{ixx}(x_i,\Delta\omega_0)\alpha_i + f_{ix\omega}(x_i,\Delta\omega_0)\Delta\dot\omega_0 + \beta + \dot c_i.\label{local feasibility}
\end{align} 
Note (\ref{stationarity})--(\ref{primal feasibility}) represent the stationarity and primal feasibility conditions, respectively, while (\ref{local feasibility}) holds at the optimal solution because $\alpha_i=\dot x_i^\star$ and $\beta=\dot\lambda^\star$. Thus, (\ref{primal feasibility})--(\ref{local feasibility}) are satisfied at $x=x^\star$. The existence of converged solutions has been established, and according to the Picard-Lindelöf theorem, $x^\star$ is the unique solution to (\ref{stationarity})--(\ref{local feasibility}) for all $t\geq T$ excluding the switching instants. On the other hand, optimality at the switching instants follows from the continuity of $x_i$ and $x_i^\star$. Consequently, $u_i\to^T \dot x_i^\star$ and $x_i\to^T x_i^\star$ with $T\leq T^{\max}\leq T_{del}^{\max}$. The proof is complete.
\end{proof}

\subsection{Algorithm for Problem TOT-2}
In this section, we present the algorithmic design for Problem TOT-2. Before that, we make an additional assumption:
\begin{assumption}\label{asp:convex}
For each $i\in\mathcal N$, the second derivative $f_{ixx}$ is constant, and $f_{ix\omega}(x_i,\Delta\omega_0)$ is continuously differentiable with respect to both $x_i$ and $\Delta\omega_0$.
\end{assumption}
Denote the partial derivatives of $f_{ixx}$ and $f_{ix\omega}$ by $f_{ixxx}$, $f_{ixx\omega}$, and $f_{ix\omega\omega}(x_i,\Delta\omega_0)$. Under Assumption 3, we have $f_{ixxx} = f_{ixx\omega} = 0$. Separately, $h_{\omega\omega}=0$ since $h_{\omega}(\Delta\omega_0)$ is a piecewise constant function. Thus, $x_i^\star$ and $\lambda^\star$ are piecewise twice continuously differentiable, and we arrive at the following corollary:

\begin{corollary}
Suppose Assumptions 1--3 hold. Design the feedforward driving terms as
\begin{align}
    \alpha_i'' &= -\tau_i\rho_i\left[f_{ix\omega}(x_i,\Delta\omega_0)\Delta\ddot\omega_0 + f_{ix\omega\omega}(x_i,\Delta\omega_0)\Delta\dot\omega_0^2 + \beta''\right]\notag\\
    &\quad + \alpha_i',\label{alpha''}\\
    \beta'' &= -\left(\sum_{i=1}^n\rho_i\right)^{\dagger}\sum_{i=1}^n\rho_if_{ix\omega}(x_i,\Delta\omega_0)\Delta\ddot\omega_0\notag\\
    &\quad -\left(\sum_{i=1}^n\rho_i\right)^{\dagger}\sum_{i=1}^n\rho_if_{ix\omega\omega}(x_i,\Delta\omega_0)\Delta\dot\omega_0^2\notag\\
    &\quad -\left(\sum_{i=1}^n\rho_i\right)^{\dagger}h_\omega(\Delta\omega_0)\Delta\ddot\omega_0c_{agg},\label{beta''}\\
    \alpha_i' &= -\rho_i\left[f_{ix\omega}(x_i,\Delta\omega_0)\Delta\dot\omega_0 + \beta'\right],\label{alpha'}\\
    \beta' &= -\left(\sum_{i=1}^n\rho_i\right)^{\dagger} \sum_{i=1}^n\rho_if_{ix\omega}(x_i,\Delta\omega_0)\Delta\dot\omega_0\notag\\
    &\quad -\left(\sum_{i=1}^n\rho_i\right)^{\dagger}h_\omega(\Delta\omega_0)\Delta\dot\omega_0c_{agg},\label{beta'}
\end{align}   
where $\rho_i$ is as defined in Theorem 1. Then, $\alpha_i''=\tau_i\ddot x_i^\star + \dot x_i^\star$, $\beta''=\ddot\lambda^\star$, $\alpha_i'=\dot x_i^\star$, and $\beta'=\dot\lambda^\star$ when $x=x^\star$. 
\end{corollary}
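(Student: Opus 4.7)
The plan is to reduce Corollary 1 to a differentiation-in-time argument along the optimal trajectory, building directly on Theorem 1. First, observe that the first-order driving terms $\alpha_i'$ and $\beta'$ in (\ref{alpha'})--(\ref{beta'}) are structurally identical to $\alpha_i$ and $\beta$ in (\ref{alpha})--(\ref{beta}); hence Lemma 3 gives $\alpha_i'=\dot x_i^\star$ and $\beta'=\dot\lambda^\star$ at $x=x^\star$ immediately, with no further work required. The only real content of the corollary is therefore the second-order statements $\alpha_i''=\tau_i\ddot x_i^\star+\dot x_i^\star$ and $\beta''=\ddot\lambda^\star$, which I would obtain by differentiating the trajectory ODEs (\ref{optimal trajectory x})--(\ref{optimal trajectory la}) once more in time and matching terms.

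Next, I would exploit Assumption 3 to simplify the chain rule. Since $f_{ixx}$ is constant, one has $f_{ixxx}=f_{ixx\omega}=0$; since $h_\omega$ is piecewise constant, $h_{\omega\omega}=0$; and since $\sigma_i\in\{0,1\}$ is constant between switching instants, the quantity $\rho_i^\star=\sigma_i f_{ixx}^{-1}$ is likewise piecewise constant, as is $\sum_{i=1}^n\rho_i^\star$ and therefore its pseudoinverse. Consequently, differentiating (\ref{optimal trajectory la}) between switching instants reduces to applying the chain rule to $f_{ix\omega}(x_i^\star,\Delta\omega_0)\Delta\dot\omega_0$ and $h_\omega(\Delta\omega_0)\Delta\dot\omega_0$; the former yields $f_{ix\omega\omega}\Delta\dot\omega_0^2+f_{ix\omega}\Delta\ddot\omega_0$ (the $f_{ixx\omega}\dot x_i^\star\Delta\dot\omega_0$ contribution vanishes), and the latter yields $h_\omega\Delta\ddot\omega_0$. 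Collecting the terms gives an expression for $\ddot\lambda^\star$ that coincides line-by-line with (\ref{beta''}), establishing $\beta''=\ddot\lambda^\star$.

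With $\ddot\lambda^\star$ in hand, I would differentiate (\ref{optimal trajectory x}) the same way to obtain $\ddot x_i^\star=-\rho_i^\star\bigl[f_{ix\omega}\Delta\ddot\omega_0+f_{ix\omega\omega}\Delta\dot\omega_0^2+\ddot\lambda^\star\bigr]$. Multiplying by $\tau_i$ and adding $\dot x_i^\star$ produces exactly the right-hand side of (\ref{alpha''}), provided one recognizes $\dot x_i^\star$ as $\alpha_i'$ (already proved) and $\ddot\lambda^\star$ as $\beta''$ (just proved). This closes the four identities claimed by the corollary.

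The main obstacle is bookkeeping rather than any deep technical difficulty: one must carefully track which partial derivatives survive under Assumption 3 and invoke the chain rule on nested compositions such as $f_{ix\omega}(x_i^\star(t),\Delta\omega_0(t))$ without inadvertently creating spurious terms. A secondary subtlety is the treatment of the switching instants of $\sigma_i$, where $\ddot x_i^\star$ and $\ddot\lambda^\star$ fail to exist as classical derivatives; this is handled by interpreting the identities in the piecewise twice-continuously-differentiable sense already announced in the statement, in keeping with the Filippov framework invoked in Remark 2 and the pseudoinverse convention from Theorem 1.
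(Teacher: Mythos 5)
Your proposal is correct and follows essentially the same route as the paper: the first-order identities are inherited from Lemma 3, and the second-order ones are obtained by differentiating (\ref{optimal trajectory x})--(\ref{optimal trajectory la}) once more in time, using Assumption 3 (so that $f_{ixxx}=f_{ixx\omega}=0$, $h_{\omega\omega}=0$, and $\rho_i^\star$ is piecewise constant) and then matching against (\ref{alpha''})--(\ref{beta''}). You merely spell out the differentiation that the paper's proof declares ``readily obtained'' and omits for brevity, including the correct handling of switching instants.
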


\vspace{3mm}

\begin{proof}
For Problem TOT-2, $\dot x_i^\star$ and $\dot\lambda^\star$ remain as given in (\ref{optimal trajectory x})--(\ref{optimal trajectory la}). Analytical expressions for $\ddot x_i^\star$ and $\ddot\lambda^\star$ can be readily obtained by taking the time derivatives of (\ref{optimal trajectory x})--(\ref{optimal trajectory la}). These expressions are omitted here for brevity, as they naturally appear in (\ref{alpha''})--(\ref{beta''}). Then, the proof can be completed by comparing $\alpha_i''$ with $\tau_i\ddot x_i^\star + \dot x_i^\star$, $\beta''$ with $\ddot\lambda^\star$, $\alpha_i'$ with $\dot x_i^\star$, and $\beta_i'$ with $\dot\lambda^\star$, after substituting $x = x^\star$.
\end{proof}

\vspace{-2mm}
Based on Corollary 1, the algorithm for Problem TOT-2 is proposed as follows:
\begin{align}
    u_i &= -\gamma_{1}e_i^{1-\frac{p}{q}}-\gamma_{2}e_i^{1+\frac{p}{q}}-\gamma_{3}\text{sign}(e_i) + \alpha_i''\text{sig}(e_i),\label{u double}\\
    e_i &= r_i - \mathcal P_{\Omega_i}(r_i-F_i),\label{e double}\\ 
    F_i &= \kappa_x\left[f_{ix}(x_i,\Delta\omega_0)+\lambda+(r_i-x_i-\tau_i\alpha_i')\right],\label{F double}\\
    \dot\lambda &= \kappa_\lambda\left(\sum_{i=1}^nx_i - h(\Delta\omega_0)c_{agg}\right)+\beta',\label{la double}
\end{align}
$r_i(0),x_i(0)\in\Omega_i$ and $\lambda(0)\in\mathbb R$, where $r_i-x_i-\tau_i\alpha_i'$ is introduced for convergence and vanishes at the optimal solution. The switching signal $\sigma_i$ is updated according to
\begin{align}\label{sigma double}
\sigma_i =
\begin{cases}
    1 & \text{if } r_i-e_i\in\mathrm{Int}(\Omega_i),\\
    0 & \text{if } r_i-e_i\in\partial\Omega_i,
\end{cases}
\end{align} 
which becomes equivalent to the one defined in Theorem 1 for all $t\geq T$, except when $r_i\in\partial\Omega_i$ and $x_i\in\text{Int}(\Omega_i)$, or $r_i\in\text{Int}(\Omega_i)$ and $x_i\in\partial\Omega_i$. the resultant sub-optimality stems from an unavoidable compromise to ensure feasibility of the input constraints but is transient due to (\ref{second-order-1}).

\begin{remark}
    Given $r_i(0)\in\Omega_i$, it follows that $r_i(t)\in\Omega_i$ for all $t\geq 0$, by an argument analogous to that used in the proof of Lemma 4. Thus, Lemma 4 and Remark 4 remain valid for (\ref{u double}), implying that $(F_i(t)-e_i(t))e_i(t)\geq 0$ and $(F_i(t)-e_i(t))u_i(t) \leq 0$ $\forall t\geq 0$. Furthermore, for $\delta\to 0^+$, $x_i(t+\delta) = x_i(t) + \delta\dot x_i(t) = x_i(t) + \delta/\tau_i(r_i(t)-x_i(t)) = (1-\delta/\tau_i)x_i(t) + \delta/\tau_ir_i(t)$, which, by convexity of $\Omega_i$, yields $x_i(t+\delta)\in\Omega_i$ if $x_i(t)\in\Omega_i$. With $x_i(0)\in\Omega_i$, one ensures $x_i(t)\in\Omega_i$ for all $t\geq 0$ as well.
\end{remark}

\begin{theorem}
Suppose that Assumptions 1--3 hold for Problem TOT-2, which is solved using Algorithm 2 defined by (\ref{alpha''})--(\ref{sigma double}). Choose $\gamma_1,\gamma_2>0$ such that $\pi q/(2\kappa_xp\sqrt{2\gamma_{1}\gamma_{2}}) \leq T_{del}^{\max}$. If $\gamma_{3}$ is sufficiently large to satisfy $\gamma_{3}\geq\Vert(f_{ixx}-1)\dot x_i + f_{ix\omega}\Delta\dot\omega_0+\dot\lambda-\tau_i\dot\alpha_i'\Vert_2$ $\forall i\in\mathcal N$ $\forall t\geq 0$, then $e_i\to^T 0$ and $x_i=x_i^\star$ $\forall i\in\mathcal N$ for any $t\geq T$ satisfying $r_i,x_i\in\text{Int}(\Omega_i)$ or $r_i,x_i\in\partial\Omega_i$, where $T$ is bounded by (\ref{Tmax}).
\end{theorem}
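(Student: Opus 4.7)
The plan is to mirror the Lyapunov argument in the proof of Theorem 2, adapted to the second-order dynamics (\ref{second-order-1})--(\ref{second-order-2}) and to the fact that the sliding-surface error now lives on the input variable $r_i$. I would take $W = (F_i - e_i)e_i + \tfrac{1}{2}e_i^2$ with $e_i = r_i - \mathcal{P}_{\Omega_i}(r_i - F_i)$; by Remark 6, $(F_i - e_i)e_i \geq 0$, so $W \geq \tfrac{1}{2}e_i^2$ remains a valid Lyapunov candidate. The envelope representation $W = -\min_{y_i \in \Omega_i}\{-F_i(r_i - y_i) + \tfrac{1}{2}(r_i - y_i)^2\}$ with minimizer $y_i = r_i - e_i$ carries over verbatim, and since $\dot r_i = u_i$ from (\ref{second-order-2}), differentiation yields $\dot W = (F_i - e_i)\,u_i + e_i\,\dot F_i$.

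I would then differentiate (\ref{F double}) under Assumption 3 to obtain $\dot F_i = \kappa_x[(f_{ixx} - 1)\dot x_i + f_{ix\omega}\Delta\dot\omega_0 + \dot\lambda + u_i - \tau_i \dot\alpha_i']$ and substitute the control (\ref{u double}). For $e_i \neq 0$ the $\alpha_i''\,\text{sig}(e_i)$ term vanishes, so by Remark 6 the three fixed-time penalties render $(F_i - e_i)u_i \leq 0$, while the cross term $\kappa_x e_i \dot F_i$ absorbs its self-referential $u_i$ into $-\kappa_x \gamma_3 |e_i|$ plus the same two penalty contributions and leaves a residual $\kappa_x e_i[(f_{ixx}-1)\dot x_i + f_{ix\omega}\Delta\dot\omega_0 + \dot\lambda - \tau_i \dot\alpha_i']$ whose magnitude is controlled by $\kappa_x \gamma_3 |e_i|$ precisely under the stated lower bound on $\gamma_3$. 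This yields the per-asset inequality $\dot W \leq -\kappa_x(\gamma_1 |e_i|^{2-p/q} + \gamma_2 |e_i|^{2+p/q})$; summing over $i$ and using $\sum_i e_i^2 \leq 2W$ reduces to the canonical fixed-time form, and invoking \cite[Theorem 3]{8322314} delivers $e_i \to^T 0$ within the stated time bound.

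For the optimality claim after $e_i$ has vanished, I would follow the closing part of Theorem 2's proof but applied to an augmented stationarity. Once $e_i = 0$, the projection identity gives $F_i + c_i = 0$ for some admissible $c_i$, i.e., $f_{ix}(x_i,\Delta\omega_0) + \lambda + (r_i - x_i - \tau_i \alpha_i') = -c_i/\kappa_x$. Under the alignment condition, the switching rule (\ref{sigma double}) coincides with that of Theorem 1, so Corollary 1 gives $\alpha_i' = \dot x_i^\star$; combining with the plant relation $\tau_i \dot x_i = r_i - x_i$ from (\ref{second-order-1}) rewrites the parenthesized residual as $\tau_i(\dot x_i - \dot x_i^\star)$, and together with the recovery of primal feasibility through (\ref{la double}) this reduces the augmented stationarity to the KKT system of Problem FVO-1; Picard--Lindel\"of uniqueness then identifies the trajectory as $(x^\star, \lambda^\star)$.

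\textbf{The main obstacle} will be this last identification, because the residual $r_i - x_i - \tau_i \alpha_i'$ depends on $x_i$ through $\alpha_i'$ and breaking the circularity requires invoking the plant equation (\ref{second-order-1}) and the alignment condition simultaneously---neither suffices in isolation, unlike in Theorem 2 where the analogous step is purely algebraic. The carve-out when $r_i$ and $x_i$ straddle interior and boundary is intrinsic to this design: enforcing input-side feasibility on $r_i$ may temporarily desynchronize the switching signals driving $\alpha_i', \alpha_i''$ from those of the true optimum, producing a transient sub-optimality that washes out only asymptotically via (\ref{second-order-1}) and therefore cannot be captured by the fixed-time bound.
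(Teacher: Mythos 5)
Your proposal follows essentially the same route as the paper's own proof: the identical Lyapunov function $W=(F_i-e_i)e_i+\tfrac12 e_i^2$ with the envelope representation on $r_i$, the bound $\dot W\le \dot F_i e_i$ via $(F_i-e_i)u_i\le 0$, absorption of the residual $(f_{ixx}-1)\dot x_i+f_{ix\omega}\Delta\dot\omega_0+\dot\lambda-\tau_i\dot\alpha_i'$ by $\gamma_3$, the fixed-time stability theorem of \cite{8322314}, and the closing identification of the converged trajectory with $(x^\star,\lambda^\star)$ through the projection identity, $\alpha_i'=\dot x_i^\star$, the plant relation $\tau_i\dot x_i=r_i-x_i$, and Picard--Lindel\"of uniqueness under the interior/boundary alignment condition. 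Your elaboration of the residual as $\tau_i(\dot x_i-\dot x_i^\star)$ and your discussion of the carve-out are consistent with (indeed slightly more explicit than) the paper's argument; the only slip is citing Remark 6 where the relevant invariance and sign properties are established in Lemma 4 and Remarks 4--5.
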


\begin{proof}
For any $i\in\mathcal N$. We adopt (\ref{lyapunov}) from the previous subsection for convergence analysis. By constructing $W'= -F_i(r_i-y_i) + \frac{1}{2}(r_i-y_i)^2$, we establish the relation $W = -\text{min}_{y_i\in\Omega_i}W'$, where the minimum is uniquely attained at $y_i=r_i-e_i$. The time derivative of $W$ can be obtained as 
\begin{align}
    \dot W &= \dot F_ie_i + (F_i-e_i)\dot r_i \leq \dot F_ie_i
\end{align}
due to $(F_i-e_i)u_i\leq 0$ and $\dot r_i = u_i$. Then, $\dot W$ can be shown to have an upper bound that is negative definite:
\begin{align}
     \dot W &\leq \langle f_{ixx}\dot x_i + f_{ix\omega}(x_i,\Delta\dot\omega_0)\Delta\dot\omega_0 +\dot\lambda,\kappa_xe_i\rangle \notag\\
     &\quad + \langle u_i-\dot x_i-\tau_i\dot\alpha_i',\kappa_xe_i\rangle  \notag\\
     &\leq -\kappa_x(\gamma_{1}\vert e_i\vert^{2-\frac{p}{q}}+\gamma_{2}\vert e_i\vert^{2+\frac{p}{q}}),
\end{align}
provided that $\gamma_{3}\geq\Vert(f_{ixx}-1)\dot x_i + f_{ix\omega}\Delta\dot\omega_0+\dot\lambda-\tau_i\dot\alpha_i'\Vert_2$. Note that $\alpha_i'$ is differentiable everywhere except at the switching instants. Again, we conclude that $e_i\to^T 0$ and $r_i-e_i\to^T r_i$. There always exists $c_i\in\{y_i\in\mathbb R\mid r_i=\mathcal P_{\Omega_i}(r_i+y_i)\}$ such that $e_i = F_i + c_i = 0$ and $\dot e_i = \dot F_i + \dot c_i = 0$ $\forall t\geq T$, where $r_i-x_i-\tau_i\alpha_i'= 0$ for $x_i=x_i^\star$. Following the similar arguments in the proof of Theorem 2, it can be shown that for any $t\geq T$ where either $r_i,x_i\in\text{Int}(\Omega_i)$ or $r_i,x_i\in\partial\Omega_i$, $x^\star$ represents the unique converged solution. The proof is thus complete.
\end{proof}

\begin{remark}
In the above, Algorithms 1--2 are organized in a centralized fashion for clarity and generality. However, it is important to emphasize that the proposed control framework is not restricted to a centralized implementation. The algorithms can be readily distributed by employing a methodology similar to our prior work \cite{arxiv}, as illustrated below using Algorithm 1 as an example. In essence, the computations of $\lambda$, $\alpha_i$, and $\beta$ require global information and therefore need to be carried out in a distributed manner. This can be achieved by applying a consensus protocol to obtain $\lambda_i$, a local estimate of $\lambda$, together with a distributed estimator to compute $\beta_i$, a local estimate of $\beta$. The value of $\alpha_i$ then follows directly from its relationship with $\beta_i$. This methodology applies to Algorithm 2 as well.
\end{remark}

\begin{remark}
    The following outlines connections and differences with some related works. \cite{7163587} establishes a foundational connection between frequency control and optimization. It, along with \cite{wang2024distributed}, can be viewed as a special case of FVO that includes a quadratic term of frequency deviation in the cost function and interprets frequency deviation as a dual variable associated with power balance. The closed loop formed by the algorithm and the physical power network converges to the steady-state optimal solution. However, tracking a trajectory that varies persistently with frequency while meeting the maximum delivery time requested by NESO, as in our work, demands complex algorithmic designs and convergence analyses different from \cite{7163587,wang2024distributed}. Compared to \cite{8673636}, which develops a feedback-based algorithm achieving exponential convergence with bounded tracking errors, our framework attains fixed-time convergence with vanishing tracking errors by incorporating both feedback and feedforward.
\end{remark}

\section{Case Studies}
\subsection{Simulation Setup}
Firstly, we verify the effectiveness of the proposed framework through case studies in MATLAB/SIMULINK on the IEEE-14 bus system, as shown in Fig. 6. The control interval is set to 1 ms, and the communication frequency is set to 1 kHz accordingly. Subsequently, the scalability of the framework is evaluated on the IEEE 39-bus system. Both systems are equipped with AGC, provided by synchronous generators, and constant bus voltage magnitudes are assumed throughout the simulations. System parameters are available in the IEEE datasheets, and the key parameters for optimization/control are listed in Table I. Based on the chosen control parameters, the upper bound on $T$ is calculated as $T^{\max}=0.785$ s. 
\begin{figure}[h]
\centerline{\includegraphics[width=0.45\textwidth]{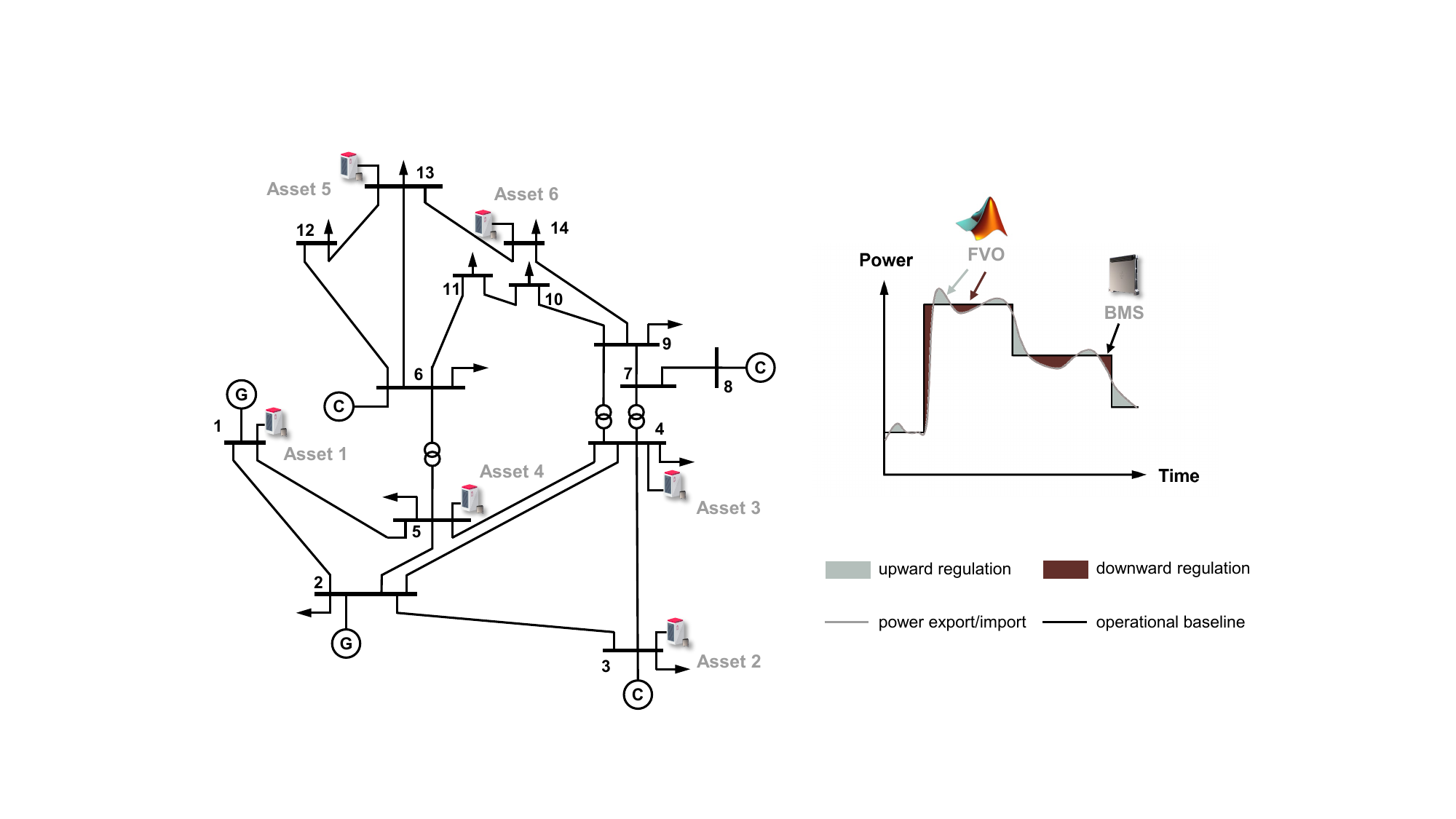}}
	\caption{The IEEE 14-bus system with an ARU comprising 6 BESSs.}
\end{figure}

The ARU receives an availability payment at the end of each settlement period (4 hours) based on its contract and availability. This payment can be treated as a constant as long as (\ref{equality constraint}) is well met, and is therefore omitted from the cost function. The cost function is modeled as a quadratic function that captures the cost of deviation from the operational baseline: $f_i(x_i(t)) = a_ix_i(t)^2 + b_ix_i(t)$, where $a_i\in\mathbb R_{>0}$ is a cost coefficient on deviating from the operational baseline, and $b_i\in\mathbb R_{>0}$ is a cost coefficient related to electricity import/export. The value of $a_i$ reflects the distance between the asset and the COI, such that assets closer to the COI contribute more actively. These cost functions can be replaced with frequency-varying ones if needs be. While the operational baselines are arbitrarily chosen in our simulation, in practice they should be optimized by the BMS, which operates at a higher hierarchical level.
\begin{table}[h]
    \centering
    \caption{Key parameters for simulation setup.}
    \label{table}
    \begin{tabular}
        {|p{52pt}|p{110pt}|}
        \hline
        \textbf{Parameter} & \textbf{Value} \\
        \hline
        $p$, $q$ & 2, 3 \\
        $\kappa_x$, $\kappa_\lambda$ & 1, 20 \\
        $\gamma_{1}$, $\gamma_{2}$, $\gamma_{3}$ & 3, 3, 200 \\
        $c_{agg}$ & 50 MW\\
        $a_i$ & 2.0, 3.2, 3.0, 2.4, 4.0, 5.0 \\
        $b_i$ & 1.0, 1.0, 1.0, 1.0, 1.0, 1.0 \\
        $\tau_i$ & 50, 160, 120, 200, 80, 150 ms \\
        $P_i^{\max}$ & 8.8, 7.7, 9.3, 17.3, 15.0, 8.0 MW \\
        $P_i(0)$ & 5.3, 1.7, -2.0, -4.3, -2.7, 6.7 MW\\
        \hline
    \end{tabular}
\end{table}

\subsection{Case Study: TOT-1}
We begin by demonstrating the effectiveness of Algorithm 1 for DC. This requires first specifying the delivery requirement curve, including $h(\Delta\omega_0)$ and $h_\omega(\Delta\omega_0)$. Since DC is a post-fault service designed to address large frequency deviations, a step increase in the net load of 1 pu is introduced at bus 2 at $t=20$ s to simulate a loss of renewable generation. As shown in Fig. 7(a), the grid frequency experiences a sudden drop but is gradually restored to the nominal value by AGC, where the black dotted lines represent the dead-band of service. Fig. 7(b) illustrates the quantity delivered by each asset that evolves dynamically due to frequency dynamics. In Figs. 7(c)--(d), we observe that the global cost function is minimized and the frequency response is delivered as contracted, where $\sum_{i=1}^n f_i(x_i^\star(t),\Delta\omega_0(t))$ is computed using the Gurobi solver. Algorithm 1 converges at approximately $t=20.2$ s, \textit{i.e.}, $T\approx0.2\leq T^{\max}=0.785\leq T_{del}^{\max} = 1$ s. When certain assets reach their state constraints, as indicated by the flattening of specific curves in Fig. 7(b), TOT remains effective. 
Next, we examine Algorithm 1 for DM under continuous net load fluctuations. Bounded arbitrary signals are generated to simulate the volatile periods targeted by DM. As shown in Fig. 8(a), stochastic fluctuations in the grid frequency are observed. Despite this, DM exhibits a relatively low utilization level in the sublinear delivery range and actively operates in the linear delivery range, as shown by the ripples in Fig. 8(b). According to Figs. 8(c)--(d), $T\approx0.3\leq T^{\max}=0.785\leq T_{del}^{\max} = 1$ s, and the optimality and feasibility of the state trajectory are well maintained irrespective of continuous net load fluctuations.
\begin{figure}[t] 
    \label{result_a1}
    \centering
    \begin{subfigure}{0.21\textwidth}
        \centering
        \includegraphics[width=\textwidth]{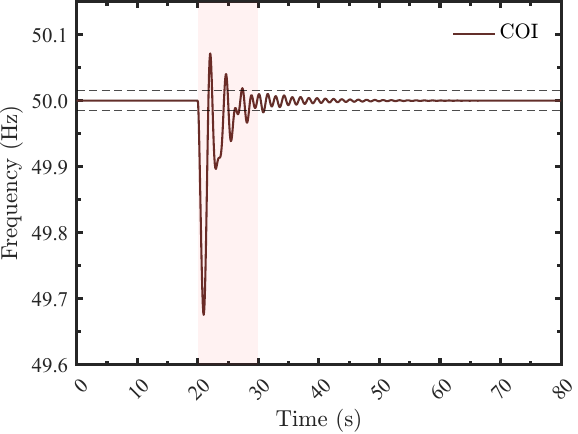}
        \caption{}
    \end{subfigure}
    \begin{subfigure}{0.2\textwidth}
        \centering
        \includegraphics[width=\textwidth]{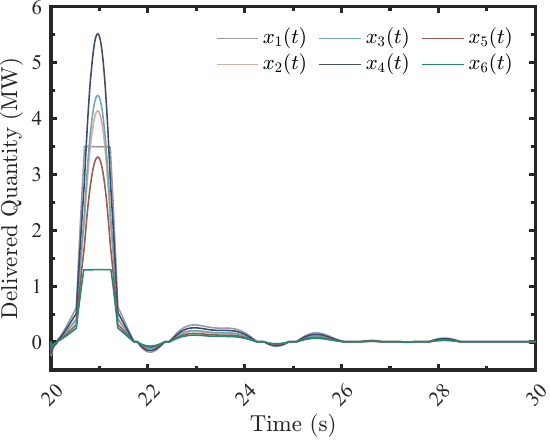}
        \caption{}
    \end{subfigure} 
    
    \begin{subfigure}{0.2\textwidth}
        \centering
        \includegraphics[width=\textwidth]{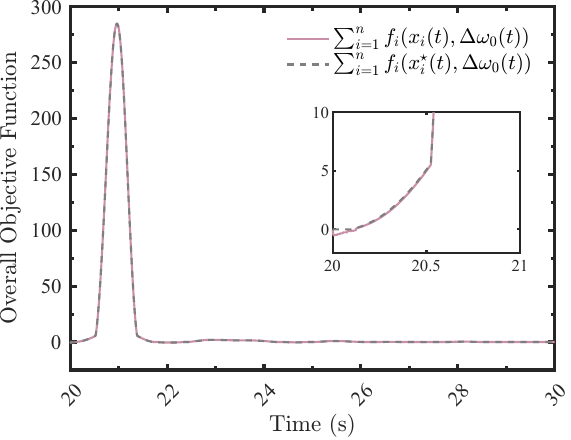}
        \caption{}
    \end{subfigure}
    \begin{subfigure}{0.2\textwidth}
        \centering
        \includegraphics[width=\textwidth]{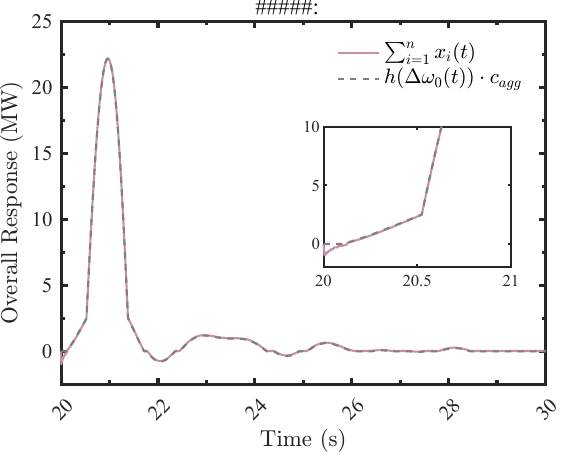}
        \caption{}
    \end{subfigure}
    \caption{Numerical results of Algorithm 1 for DC, solving Problem TOT-1 under step increase of net load: (a) COI frequency; (b) delivered quantity; (c) optimality; (d) feasibility.}
\end{figure}
\begin{figure}[t]
    \centering
    \begin{subfigure}{0.202\textwidth}
        \centering
        \includegraphics[width=\textwidth]{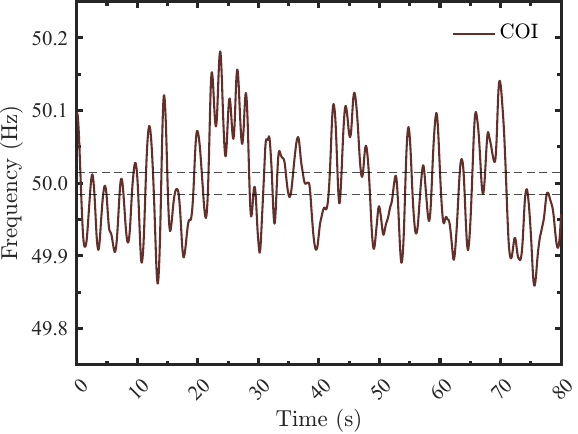}
        \caption{}
    \end{subfigure}
    \begin{subfigure}{0.2\textwidth}
        \centering
        \includegraphics[width=\textwidth]{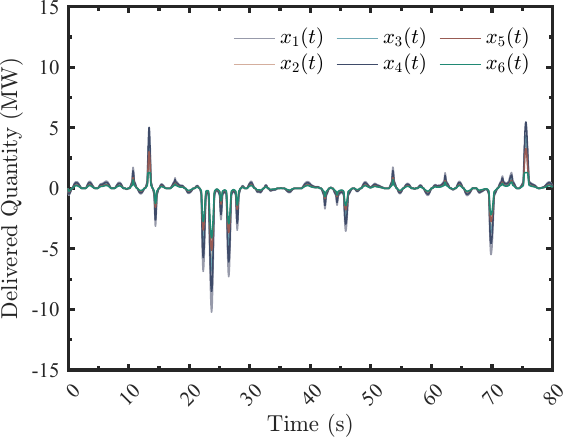}
        \caption{}
    \end{subfigure} 
    
    \begin{subfigure}{0.2\textwidth}
        \centering
        \includegraphics[width=\textwidth]{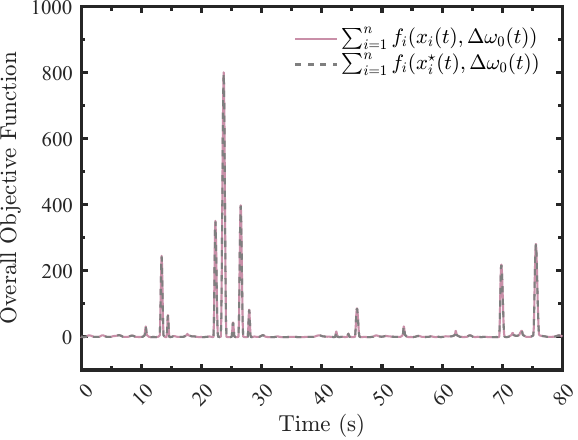}
        \caption{}
    \end{subfigure}
    \begin{subfigure}{0.2\textwidth}
        \centering
        \includegraphics[width=\textwidth]{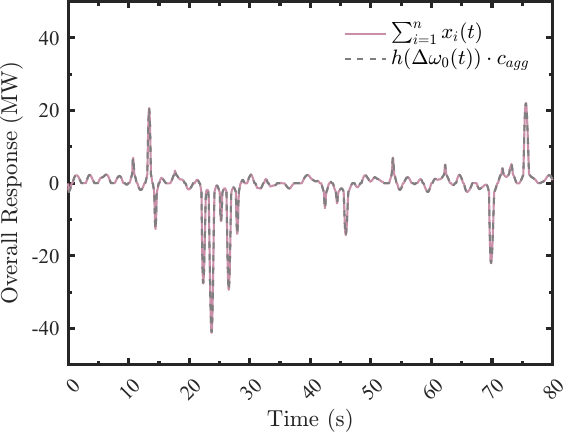}
        \caption{}
    \end{subfigure}
    \caption{Numerical results of Algorithm 1 for DM, solving Problem TOT-1 under continuous net load fluctuations: (a) COI frequency; (b) delivered quantity; (c) optimality; (d) feasibility.}
\end{figure}

\subsection{Case Study: TOT-2}
As previously discussed, Problem TOT-1 represents a relatively idealized scenario compared to Problem TOT-2. We now verify the effectiveness of Algorithm 2 in addressing Problem TOT-2, which accounts for the non-negligible dynamics associated with the inverter-side filter and transformer. 
Fig. 9 presents the results of Algorithm 2 for DR, solving Problem TOT-2 under continuous net load fluctuations. It is observed that $T\approx 0.3\leq T^{\max}=0.785\leq T_{del}^{\max} = 10$ s. The discrepancy between $r_i(t)$ and $x_i(t)$, even minor, can have a significant impact on tracking. This is demonstrated in Fig. 10, where Algorithms 1--2 are evaluated under the same frequency event triggering DC. While Algorithm 1 exhibits degraded tracking performance with noticeable oscillations, Algorithm 2 converges in approximately 0.4 s, followed by effective TOT.
\begin{figure}[t]
    \centering
    \begin{subfigure}{0.205\textwidth}
        \centering
        \includegraphics[width=\textwidth]{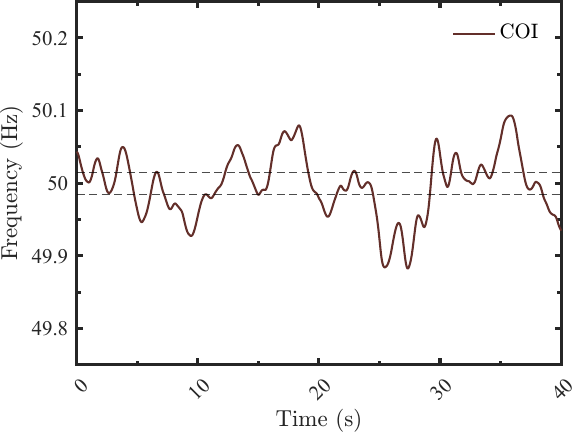}
        \caption{}
    \end{subfigure}
    \begin{subfigure}{0.2\textwidth}
        \centering
        \includegraphics[width=\textwidth]{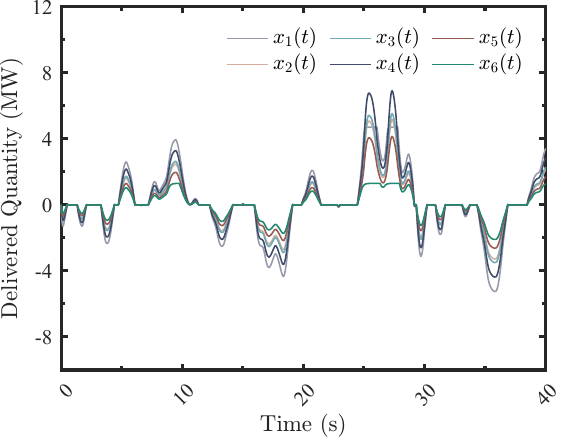}
        \caption{}
    \end{subfigure} 

    \begin{subfigure}{0.2\textwidth}
        \centering
        \includegraphics[width=\textwidth]{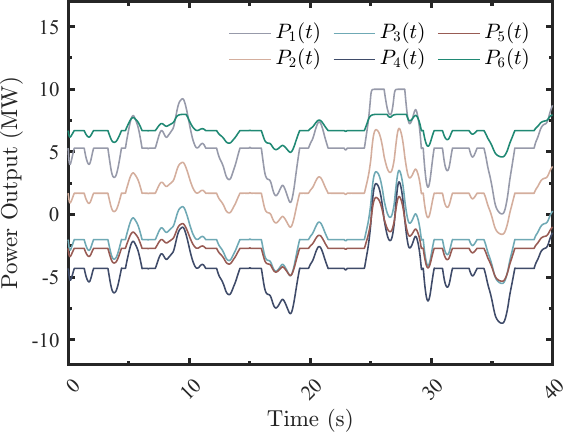}
        \caption{}
    \end{subfigure}
    \begin{subfigure}{0.2\textwidth}
        \centering
        \includegraphics[width=\textwidth]{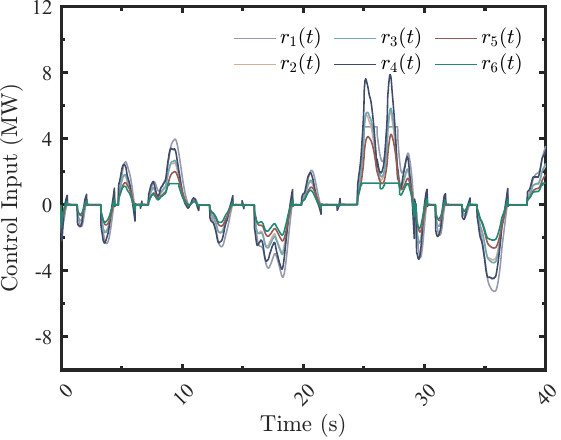}
        \caption{}
    \end{subfigure} 
    
    \begin{subfigure}{0.2\textwidth}
        \centering
        \includegraphics[width=\textwidth]{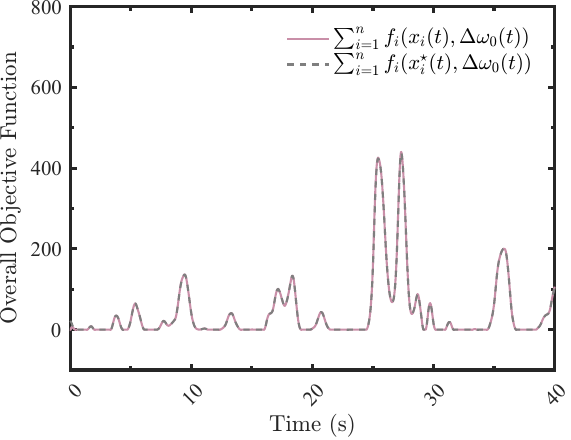}
        \caption{}
    \end{subfigure}
    \begin{subfigure}{0.2\textwidth}
        \centering
        \includegraphics[width=\textwidth]{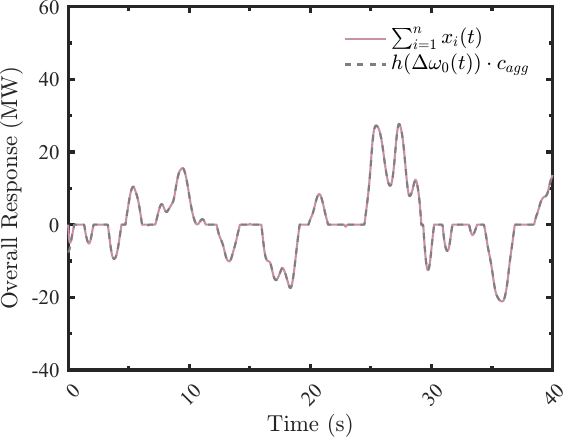}
        \caption{}
    \end{subfigure}
    \caption{Numerical results of Algorithm 2 for DR, solving Problem TOT-2 under continuous net load fluctuations: (a) COI frequency; (b) delivered quantity; (c) control input; (d) power output; (e) optimality; (f) feasibility.}
\end{figure}
\begin{figure}[h]
    \centering
    \begin{subfigure}{0.4\textwidth}
        \centering
        \includegraphics[width=\textwidth]{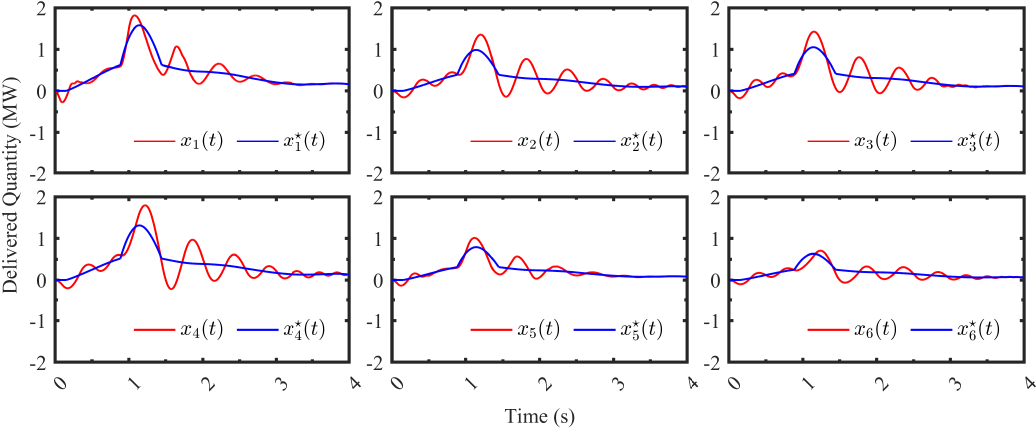}
        \caption{}
    \end{subfigure}
    \begin{subfigure}{0.4\textwidth}
        \centering
        \includegraphics[width=\textwidth]{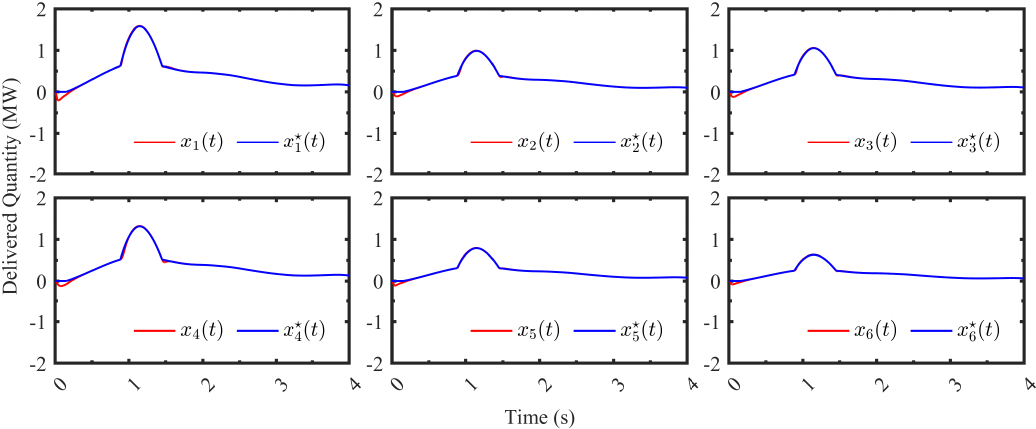}
        \caption{}
    \end{subfigure}
    \caption{Tracking performance of Algorithms 1--2 when considering BESS dynamics: (a) Algorithm 1; (b) Algorithm 2.}
\end{figure}

In what follows, we conduct scalability tests on the IEEE 39-bus system, which is partitioned into three geographical areas, each corresponding to an ARU comprising 30 BESSs located across the area, as shown in Fig. 11. Each ARU is contracted to deliver a total of 30 MW as part of the DR service. The cost parameters and time constants of the BESSs are randomly generated but kept identical across the three ARUs for comparison. The system is subject to continuous net load fluctuations, and different control strategies are deployed in each ARU: Algorithm 2 in ARU-1, Algorithm 1 in ARU-2, and a benchmark primal-dual projected gradient algorithm \cite{initialization_free} in ARU-3. the resultant frequency response mismatch $\sum_{i=1}^n x_i - h(\Delta\omega_0)c_{agg}$ are shown in Fig. 12. As illustrated, Algorithm 2 achieves the closest to ideal tracking performance, with the frequency response mismatch remaining nearly zero. Algorithm 1 exhibits larger oscillations compared to Algorithm 2, though still within an acceptable range. In contrast, the benchmark algorithm fails to achieve satisfactory optimization results in view of the amplitude of oscillations. This reveals the importance of fixed-time convergence and feedforward driving terms for TOT.
\begin{figure}[b]
\centerline{\includegraphics[width=0.4\textwidth]{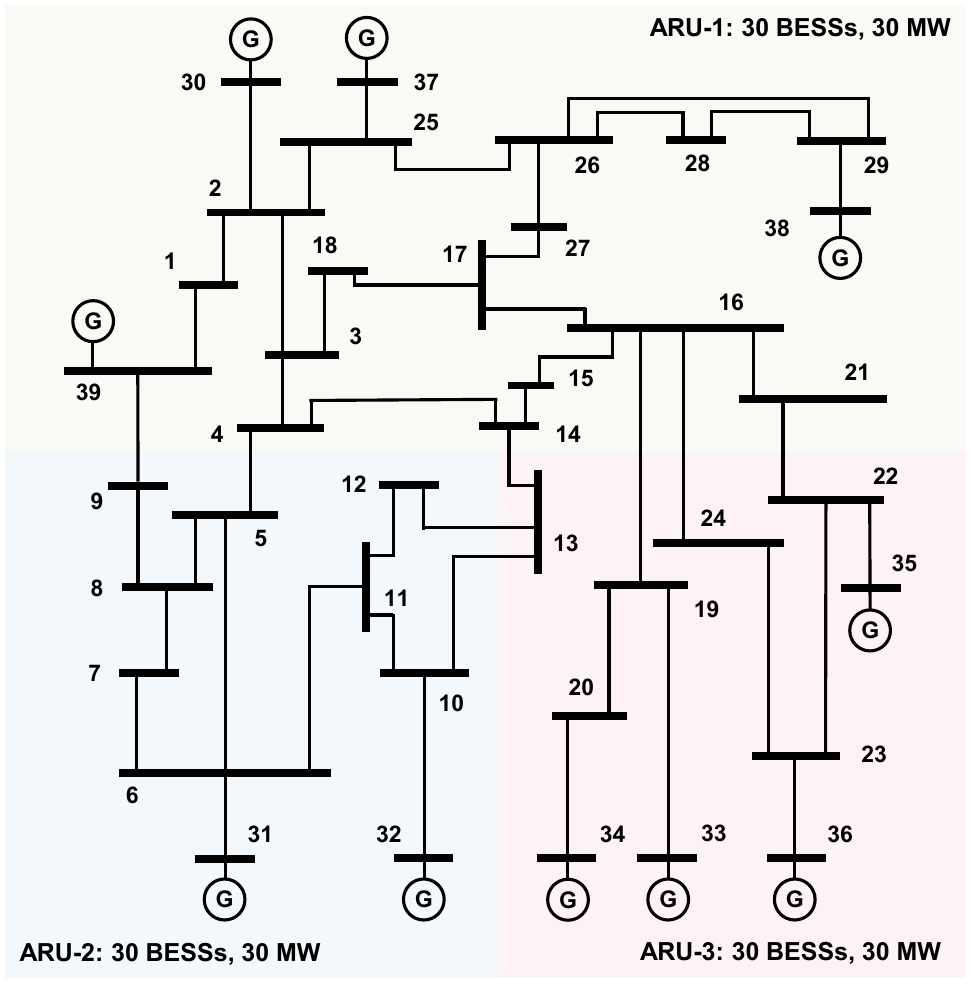}}
	\caption{The IEEE 39-bus system with three ARUs, each comprising 30 BESSs across the colored area.}
\end{figure}
\begin{figure}[htbp]
    \centering
    \begin{subfigure}{0.4\textwidth}
        \centering
        \includegraphics[width=\textwidth]{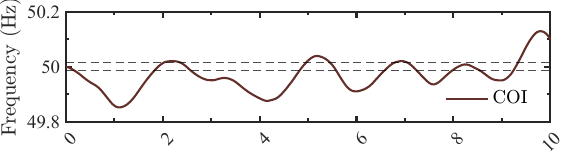}
    \end{subfigure}
    \begin{subfigure}{0.4\textwidth}
        \centering
        \includegraphics[width=\textwidth]{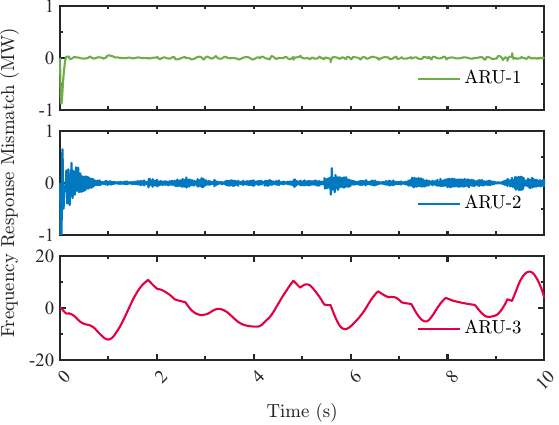}
    \end{subfigure}
    \caption{Frequency response mismatches in the three ARUs coordinated by Algorithm 2, Algorithm 1, and the benchmark algorithm, respectively.}
\end{figure}

Furthermore, we evaluate the average computational time of Algorithm 1 and Algorithm 2 by varying the number of assets, under centralized and distributed implementations. In a centralized implementation, an aggregator gathers all information and performs the computations. In a distributed implementation, each asset performs local computation based on its own information and information exchanged through the communication network. The communication topology can be very sparse, as long as a path connecting any two assets exists. Algorithms 1--2 are distributed according to Remark 6, adopting an undirected communication topology where each asset communicates with two other assets. The average computational time per control interval, evaluated under different test configurations, is shown in Table II, using an Intel i5-1135G7 processor (2.40 GHz, 4 cores, 8 threads, 16 GB RAM). The proposed control framework exhibits comparable computational time for both algorithms, with all computations completed easily within each control interval. While the average computational time for centralized implementation scales linearly with the number of assets, the distributed implementation maintains a low computational time that increases sublinearly. Overall, the framework is computationally efficient in both centralized and distributed implementations, capable of coordinating a large number of assets within an ARU.

\begin{table}[htbp]
\centering
\caption{Average computational time per control interval under different configurations.}
\begin{tabular}{|cc|cc|cc|}
\hline
\multicolumn{2}{|c|}{\multirow{2}{*}{Configuration}} & \multicolumn{2}{c|}{Algorithm 1}          & \multicolumn{2}{c|}{Algorithm 2}          \\ \cline{3-6} 
\multicolumn{2}{|c|}{}                               & \multicolumn{1}{c|}{Cent.} & Dist.  & \multicolumn{1}{c|}{Cent.} & Dist.  \\ \hline
\multicolumn{1}{|c|}{\multirow{3}{*}{Num.}}   & 30    & \multicolumn{1}{c|}{9 us}  & 0.7 us & \multicolumn{1}{c|}{10 us} & 0.8 us \\ \cline{2-6} 
\multicolumn{1}{|c|}{}                       & 60    & \multicolumn{1}{c|}{18 us} & 0.8 us & \multicolumn{1}{c|}{20 us} & 1.0 us \\ \cline{2-6} 
\multicolumn{1}{|c|}{}                       & 120   & \multicolumn{1}{c|}{36 us} & 1.0 us & \multicolumn{1}{c|}{40 us} & 1.2 us  \\ \hline
\end{tabular}
\end{table}

\section{Conclusions}
This paper has bridged frequency response services with FVO, demonstrating that the optimal coordination of an ARU constitutes an FVO problem, where the equality constraint varies with grid frequency and the optimal trajectory evolves dynamically. To facilitate online optimization, we have reformulated it into two TOT problems, accounting for asset dynamics. Algorithms incorporating feedback and feedforward terms have been proposed for the TOT problems, therefore solving the FVO problem online within the maximum delivery time while respecting the state/input constraints. The proposed framework has been demonstrated to be effective and scalable through case studies on the IEEE 14- and 39-bus systems. In terms of future directions, we are interested in investigating service stacking and distribution-level ARUs, which are becoming increasingly prevalent. Future work may also explore simultaneous voltage regulation and the use of hybrid energy storage for fast frequency response.

\section{Appendix}
\subsection{Proof of Lemma 1}
    1) Since $g(\epsilon,x)\to\mathbb I(x)=0$ as $\epsilon\to+\infty$ point-wise, it is straightforward to conclude $\lim_{\epsilon\to+\infty}g_{xx}(\epsilon,x)=0$ for all $x\in\mathrm{Int}(\Omega)$. 2) Fix any $x_0\in\partial\Omega$. Suppose, for contradiction, that $g_{xx}(\epsilon,x_0)$ remains bounded as $\epsilon\to+\infty$. Then, there exist finite $M$ and $\epsilon_0$ such that
    \begin{align}\notag
        \sup_{\epsilon\geq\epsilon_0}g_{xx}(\epsilon,x_0)\leq M.
    \end{align}
    Choose $\delta\in\mathbb R$ such that $x_0-\delta\in\mathrm{Int}(\Omega)$ and $x_0+\delta\notin\Omega$. By convexity, $g_x(\epsilon,x_0)\delta \leq g(\epsilon,x_0)-g(\epsilon,x_0-\delta)$, from which we know $\lim_{\epsilon\to+\infty}g_x(\epsilon,x_0)\delta\leq 0 - 0 = 0$. By applying Taylor’s theorem with Lagrange remainder, we have
    \begin{align}\notag
        g(\epsilon,x_0+\delta) = g(\epsilon,x_0)+ g_x(\epsilon,x_0)\delta +\tfrac12 g_{xx}(\epsilon,\xi)\delta^2
    \end{align}
    for some $\xi\in(0,\delta)$ and $g(\epsilon,x)$ that is twice continuously differentiable with respect to $x$. As a result,
    \begin{align}\notag
        \lim_{\epsilon\to+\infty}g(\epsilon,x_0+\delta)\leq\tfrac12 M\delta^2<+\infty,
    \end{align}
    which contradicts the definition that $\lim_{\epsilon\to+\infty}g(\epsilon,x_0+\delta)=\mathbb I(x_0+\delta)=+\infty$. Hence $\lim_{\epsilon\to+\infty}g_{xx}(\epsilon,x)=+\infty$ for all $x\in\partial\Omega$.
    The proof is complete.

\subsection{Proof of Theorem 1}
To facilitate our analysis, define $f_0(x,\Delta\omega_0) = \sum_{i=1}^n f_i(x_i,\Delta\omega_0)$. The feasible region for Problem FVO-1 is given by $\mathcal S = \{x\in\prod_{i=1}^n\Omega_i \mid 1_n^\top x = h(\Delta\omega_0)c_{agg}\}$, which is nonempty due to Slater's condition.
    
    \textbf{Existence:} 
    Since $f_0(x,\Delta\omega_0)$ is strongly convex in $x$, it follows that $f_0(x,\Delta\omega_0) \to \infty$ as $\Vert x\Vert \to \infty$. Let $x_0 \in \mathcal S$ be an arbitrary feasible point. The sublevel set $\mathcal L = \{x\in\mathcal S \mid f_0(x,\Delta\omega_0) \leq f_0(x_0,\Delta\omega_0)\}$ is closed (as the intersection of closed sets: affine equality and convex inequality constraints induce closed feasible regions) and bounded due to coercivity. By the Weierstrass theorem, a continuous function attains its minimum over a compact set. Hence, there exists $x^\star\in\mathcal L\subseteq \mathcal S$ such that
    \begin{align}\notag
        f_0(x^\star,\Delta\omega_0) = \inf_{x\in\mathcal S} f_0(x,\Delta\omega_0).
    \end{align}
    
    \textbf{Uniqueness:}  
    Suppose, for the sake of contradiction, two distinct optimal solutions $x_a, x_b \in \mathcal S$ exist with $f_0(x_a,\Delta\omega_0) = f_0(x_b,\Delta\omega_0) = f_0^\star$. Let $x_c = \frac{x_a + x_b}{2}$. By convexity of $\mathcal S$, we know $x_c \in \mathcal S$. Since $f_0$ is strongly convex, there exists $\mu\in\mathbb R_{>0}$ such that $f_0(x_c,\Delta\omega_0)\leq\frac{1}{2}f_0(x_a,\Delta\omega_0) + \frac{1}{2}f_0(x_b,\Delta\omega_0) - \frac{\mu}{8}\Vert x_a - x_b\Vert^2$. Substituting $f_0(x_a,\Delta\omega_0) = f_0(x_b,\Delta\omega_0) = f_0^\star$ yields 
    \begin{align}\notag
        f_0(x_c,\Delta\omega_0) = f_0^\star - \frac{\mu}{8}\Vert x_a - x_b\Vert^2 < f_0^\star,
    \end{align}
    which contradicts the optimality of $f_0^\star$. Therefore, the optimal solution must be unique.
    
    \textbf{Dynamics:} 
    By defining the indicator function $\mathbb I(x_i)=0$ if $x_i\in\Omega_i$ and $\mathbb I(x_i)=+\infty$ otherwise, we have
    \begin{align}
        x^\star &= \text{argmin}_{x\in\Omega}\sum_{i=1}^nf_i(x_i,\Delta\omega_0) \text{ s.t. (8)}\notag\\
        &= \text{argmin}_{x\in\mathbb R^n}\sum_{i=1}^n\left[f_i(x_i,\Delta\omega_0)+\mathbb I(x_i)\right] \text{ s.t. (8)}.
    \end{align}
    On the other hand, define the regularized cost function $\tilde f_i(\epsilon,x_i,\Delta\omega_0) = f_i(x_i,\Delta\omega_0) + g_i(\epsilon,x_i),\ \epsilon\in\mathbb R_{>0}$, where $g_i(\epsilon,x_i):\mathbb R_{>0}\times\mathbb R\to\mathbb R$ is twice continuously differentiable convex in $x_i$ and satisfies $\lim_{\epsilon\to+\infty} g_i(\epsilon,x_i) = \mathbb I(x_i)$. This produces a regularized solution:
    \begin{align}
        x^r &= \text{argmin}_{x\in\mathbb R^n}\sum_{i=1}^n\tilde f_i(x_i,\Delta\omega_0) \text{ s.t. (8)}.
    \end{align}
    If follows that
    \begin{align}
        \tilde f_i(\epsilon,x_i,\Delta\omega_0) \to f_i(x_i,\Delta\omega_0) + \mathbb I(x_i) \text{ as } \epsilon\to+\infty,
    \end{align}
    and consequently,
    \begin{align}\label{equivalent-1}
        \ x^r \to x^\star,\ \lambda^r \to \lambda^\star \text{ as } \epsilon\to+\infty,
    \end{align}
    where $\lambda^r$ and $\lambda^\star$ denote the optimal Lagrange multipliers associated with $x^r$ and $x^\star$, respectively. This equivalence allows us to express the dynamics of the optimal trajectory.
    
    For any $\epsilon\in(0,+\infty)$, the optimality conditions of (50) yield $0 = \tilde f_{ix}(x_i^r,\Delta\omega_0) + \lambda^r$ and $0 = \sum_{i=1}^n x_i^r - h(\Delta\omega_0)c_{agg}$. Differentiating both equations with respect to time and applying the chain rule gives $0 = \tilde f_{ixx}(x_i^r,\Delta\omega_0)\dot x_i^r + f_{ix\omega}(x_i^r,\Delta\omega_0)\Delta\dot\omega_0 + \dot\lambda^r$ and $0 = \sum_{i=1}^n \dot x_i^r - h_\omega(\Delta\omega_0)\Delta\dot\omega_0c_{agg}$. From these, we obtain
    \begin{align}
        \dot x_i^r = &-\tilde f_{ixx}(\epsilon,x_i^r,\Delta\omega_0)^{-1}\left[f_{ix\omega}(x_i^r,\Delta\omega_0)\Delta\dot\omega_0+\dot\lambda^r\right],\label{approximate optimal trajectory x}\\
        \dot\lambda^r = &-\left(\sum_{i=1}^n\tilde f_{ixx}(\epsilon,x_i^r,\Delta\omega_0)^{-1}\right)^{-1}\notag\\
        &\times\left(\sum_{i=1}^n\tilde f_{ixx}(\epsilon,x_i^r,\Delta\omega_0)^{-1}f_{ix\omega}(x_i^r,\Delta\omega_0)\Delta\dot\omega_0\right)\label{approximate optimal trajectory la}\\
        & -\left(\sum_{i=1}^n\tilde f_{ixx}(\epsilon,x_i^r,\Delta\omega_0)^{-1}\right)^{-1}h_\omega(\Delta\omega_0)\Delta\dot\omega_0c_{agg}.\notag
    \end{align}
    
    According to Lemma 1, we have $\lim_{\epsilon\to+\infty}g_{ixx}(\epsilon,x_i)=0$ if $x_i\in\mathrm{Int}(\Omega_i)$ and $\lim_{\epsilon\to+\infty}g_{ixx}(\epsilon,x_i)=+\infty$ if $x_i\in\partial\Omega_i$. Therefore, it follows from (\ref{rho}) that
    \begin{align}\label{equivalent-2}
    \lim_{\epsilon\to+\infty}\tilde f_{ixx}(\epsilon,x_i,\Delta\omega_0)^{-1}  = \rho_i.
    \end{align}
    Taking $\epsilon\to+\infty$ in (\ref{approximate optimal trajectory x})--(\ref{approximate optimal trajectory la}), substituting (\ref{equivalent-2}), and invoking (\ref{equivalent-1}), we establish the equivalence between (\ref{approximate optimal trajectory x})--(\ref{approximate optimal trajectory la}) and (\ref{optimal trajectory x})--(\ref{optimal trajectory la}). This indicates that $x_i^\star$ and $\lambda^\star$ are piecewise continuously differentiable with respect to $t$ and may fail to be differentiable only at the switching instants of $\sigma_i$. The proof is complete.

\subsection{Proof of Lemma 4}
\textbf{Part I:} We prove the feedforward invariance of $\Omega_i$ by contradiction. Suppose that the initial value satisfies $x_i(0)\in\Omega_i$, and there exists a time $t_i\geq 0$ such that $x_i(t)$ exits $\Omega_i$ at $t=t_i$. This implies that $x_i(t_i)\in\partial\Omega_i$ and $x_i(t_i+\epsilon)\notin\Omega_i$, where $\delta\to 0^+$. 1) If $e_i(t_i)=0$, then $x_i(t_i)-e_i(t_i)=x_i(t_i)\in\partial\Omega_i$, so that $\sigma_i(t_i)=0$ and $\alpha_i(t_i) = 0$. 2) If  $e_i(t_i)\neq 0$, then $\alpha_i(t_i)\text{sig}(e_i(t_i))=0$, which means $x_i$ is solely feedback-driven as it exits $\Omega_i$.

Therefore, for both cases, we have
\begin{align}
x_i(t_i+\delta) = x_i(t_i)+\delta u_i(t_i).\notag
\end{align}
Because $u_i = -f_{ixx}(x_i,\Delta\omega_0)^{-1}(\gamma_{1}\vert e_i\vert^{1-\frac{p}{q}}+\gamma_{2}\vert e_i\vert^{1 + \frac{p}{q}}+\gamma_{3})\text{sign}(e_i)$, we can rewrite $u_i(t)$ as $u_i(t)=-\eta_i(t)e_i(t)$ with a non-negative and finite $\eta_i(t)$. As a result,
\begin{align}
x_i(t_i+\delta) = \left[1-\delta\eta_i(t_i)\right]x_i(t_i) +  \delta\eta_i(t_i)\left[x_i(t_i)-e_i(t_i)\right].\notag
\end{align}
Note that $x_i(t_i)\in\partial\Omega_i$, $x_i(t_i)-e_i(t_i)=\mathcal P_{\Omega_i}(x_i(t_i)-F_i(t_i))\in\Omega_i$, and $\delta\eta_i(t_i)\in[0,1)$, it follows from convexity of $\Omega_i$ that $x_i(t_i+\delta)\in\Omega_i$. As both cases contradict the earlier assumption that $x_i(t_i+\delta)\notin\Omega_i$, it follows that $x_i(t)\in\Omega_i$ for all $t\geq 0$.

\textbf{Part II:} We show $(F_i-e_i)e_i \geq 0$ for all $t\geq 0$. According to Lemma 2, for any $y_i\in\Omega_i$, we have
\begin{align}
    &\langle (x_i-F_i)-\mathcal P_{\Omega_i}(x_i-F_i),y_i-\mathcal P_{\Omega_i}(x_i-F_i)\rangle \notag\\
    &=(e_i-F_i)(y_i-x_i+e_i)\notag\\
    &= (e_i-F_i)e_i + (e_i-F_i)(y_i-x_i) \leq 0.
\end{align}
Since $x_i\in\Omega_i$ all $t\geq 0$, as previously proven, we can simply put $y_i=x_i\in\Omega_i$, which gives $\langle e_i-F_i,y_i-x_i\rangle  = 0$ and
\begin{align}
    (e_i-F_i)e_i\leq 0.
\end{align}
Combining Parts I and II completes the proof.

\bibliographystyle{ieeetr}
\bibliography{ref}
\IEEEpubidadjcol
\newpage

\begin{IEEEbiography}
[{\includegraphics[width=1in,height=1.25in,clip,keepaspectratio]{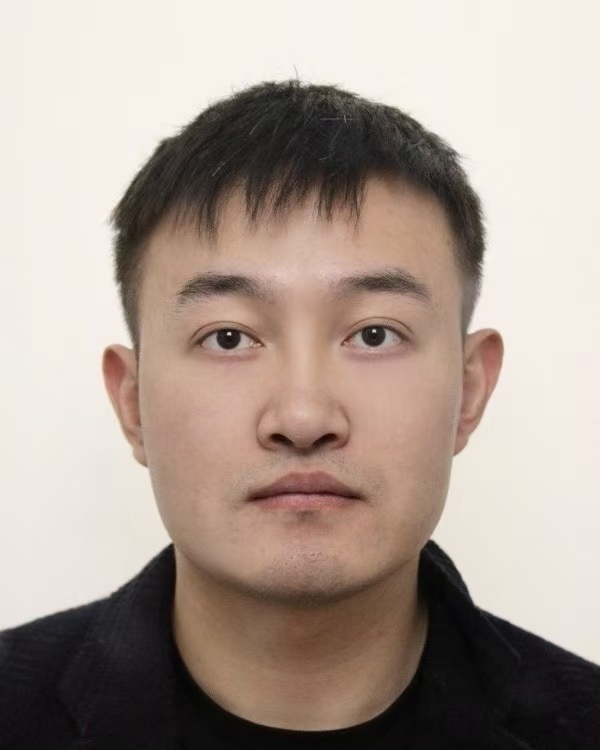}}]%
{Yiqiao Xu} received the M.Sc. degree in Advanced Control and Systems Engineering in 2018 and the Ph.D. degree in Electrical and Electronic Engineering in 2023, both from the University of Manchester, U.K. Since 2023, he has been a Postdoctoral Research Associate with the Power and Energy Division at the same institution. His research interests include distributed optimization and learning-based control of power networks and multi-energy systems. He received the Best Theory Paper Award at the IEEE 30th International Conference on Automation and Computing (ICAC) in 2025.
\end{IEEEbiography}

\begin{IEEEbiography}[{\includegraphics[width=1in,height=1.25in,clip,keepaspectratio]{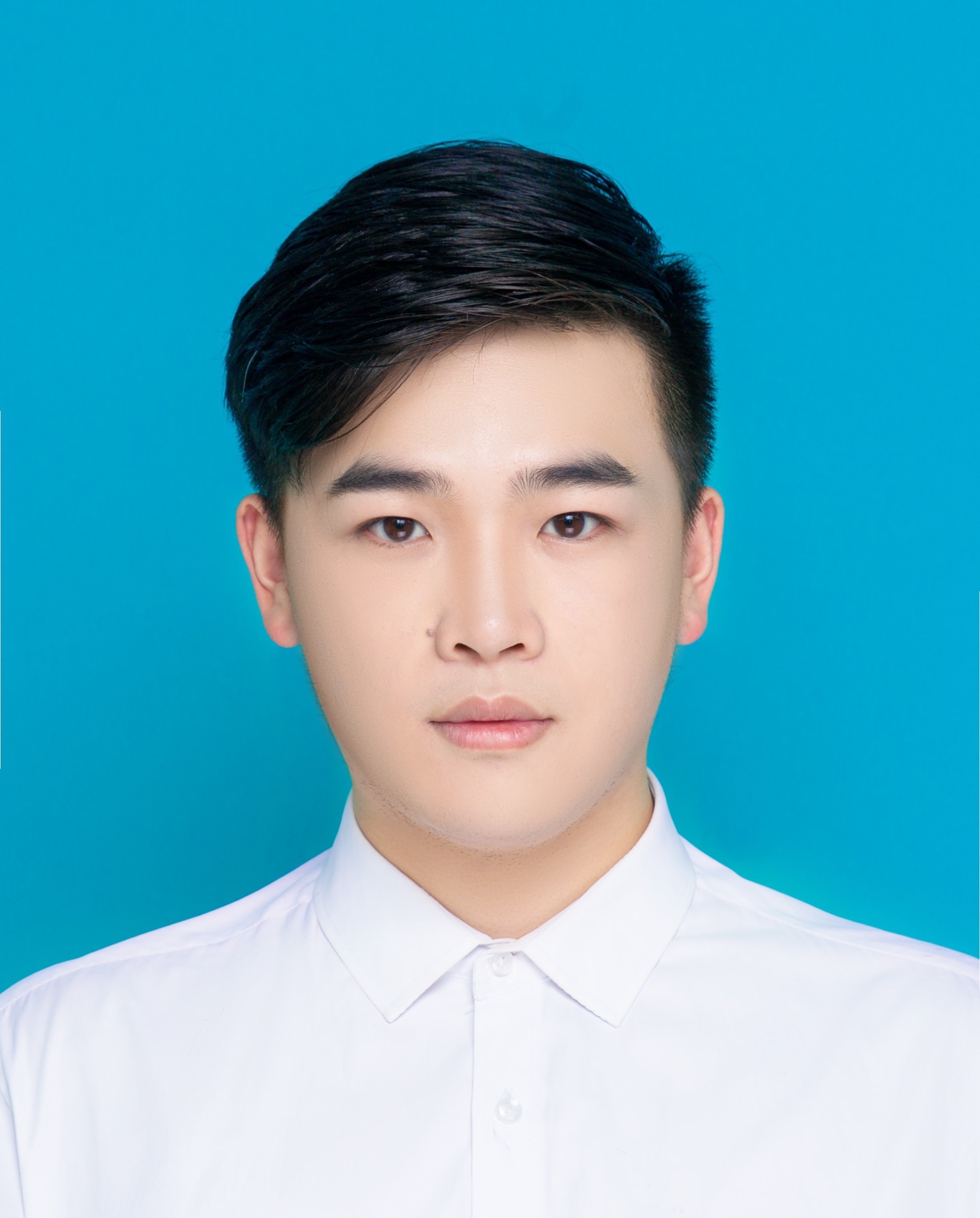}}]%
{Quan Wan} received his B.Eng. degree in vehicle engineering from Wuhan University of Technology, China, in 2016 and his M.Sc. degree in mechanical engineering from the University of Florida, USA, in 2019. He is currently pursuing a Ph.D. degree in electrical and electronic engineering at the University of Manchester, U.K. His research interests include optimal control, adaptive control, and reinforcement learning, with applications in power and energy systems. 
\end{IEEEbiography}

\begin{IEEEbiography}[{\includegraphics[width=1in,height=1.25in,clip,keepaspectratio]{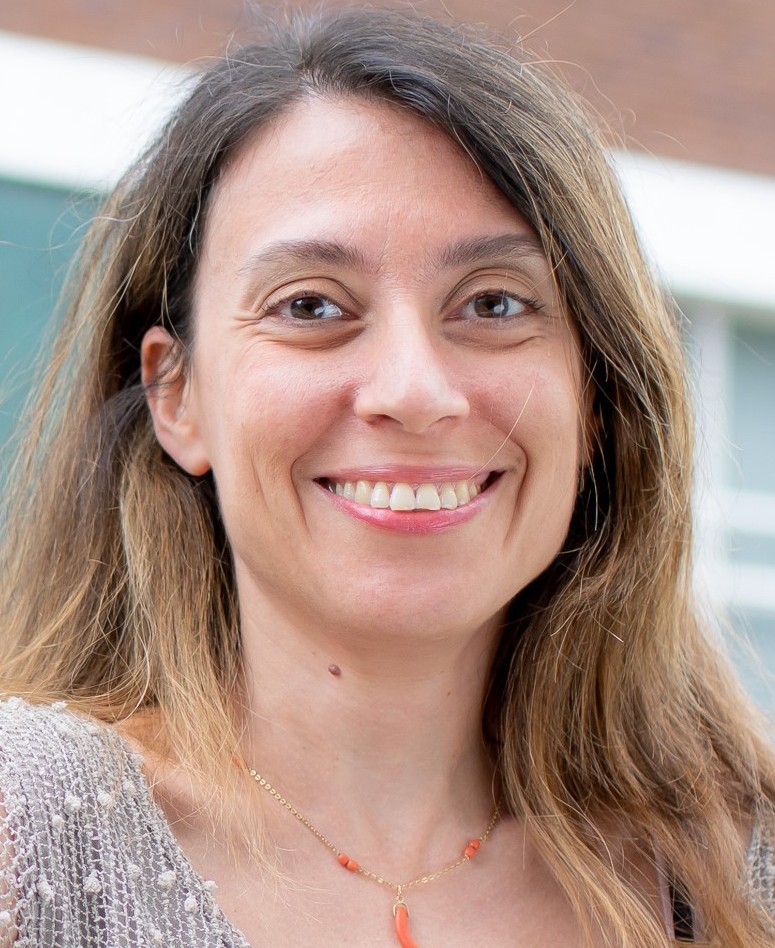}}]%
{Alessandra Parisio} is a Professor of Control of Sustainable Energy Networks, in the Department of Electrical and Electronic Engineering at The University of Manchester, U.K. She is IEEE senior member, co-chair of the IEEE RAS Technical Committee on Smart Buildings and vice-Chair for Education of the IFAC Technical Committee 9.3. Control for Smart Cities. She is currently an Associate Editor of the IEEE Transactions on Control of Network Systems, IEEE Transactions on Automation Science and Engineering, European Journal of Control and Applied Energy. Her main research interests span the areas of control engineering, in particular Model Predictive Control, distributed optimization and control, stochastic constrained control, and power systems, with energy management systems under uncertainty, optimization and control of multi-energy networks and distributed flexibility.
\end{IEEEbiography}

\end{document}